\documentclass[letterpaper, 10 pt, conference]{ieeeconf}
\IEEEoverridecommandlockouts
\usepackage{amsmath,amssymb,amsthm}
\usepackage{mathtools}
\usepackage{bm}
\usepackage{bbm}           
\usepackage{graphicx}
\usepackage{subcaption}
\usepackage{cite}
\usepackage{url}
\usepackage{hyperref}
\usepackage{xcolor}
\usepackage{algorithm}
\usepackage{algorithmic}

\usepackage{booktabs}
\usepackage{multirow}
\usepackage{tabularx}
\usepackage{xspace}
\usepackage{makecell}
\usepackage{booktabs}

\newtheorem{theorem}{Theorem}

\newtheorem{remark}{Remark}

\newtheorem{assumption}{Assumption}

\newcommand{\X}{\mathcal{X}}

\newcommand{\K}{\mathcal{K}}
\newcommand{\D}{\mathcal{D}}

\newcommand{\Real}{\mathbb{R}}

\newcommand{\nK}{n_{\mathcal{K}}}
\newcommand{\nO}{n_{\mathcal{O}}}
\newcommand{\STL}{\texorpdfstring{\ensuremath{\mathrm{STL}}}{STL}\xspace}
\newcommand{\BHK}{\texorpdfstring{\ensuremath{\mathrm{BHK}}}{BHK}\xspace}
\newcommand{\GCS}{\texorpdfstring{\ensuremath{\mathrm{GCS}}}{GCS}\xspace}
\newcommand{\norm}[1]{\left\lVert #1 \right\rVert}
\newcommand{\minimize}{\text{minimize}}

\begin{document}

\title{Motion Planning with Temporal Logic Precedence Specifications
in Dynamic Environments via Spacetime Augmented
Graphs of Convex Sets}

\author{Shilin You \quad Lars Lindemann \quad Tyler~H.~Summers%
  \thanks{S. You and T.~H.~Summers are with the Erik Jonsson School of Engineering and
    Computer Science, University of Texas at Dallas. E-mail: \texttt{tyler.summers@utdallas.edu}. L. Lindemann is with the Automatic Control Laboratory, ETH Z\"urich.}%
  \thanks{This work is supported by the United States Air Force Office of
    Scientific Research under Grant FA9550-23-1-0424, and by the National
    Science Foundation under Grant ECCS-2047040.}%
}

\maketitle

\begin{abstract}
We extend the augmented graph of convex sets (\GCS) framework for motion planning with temporal logic precedence specifications to dynamic environments in which obstacles, keys, and doors undergo piecewise-linear translation and/or dilation over time. The main insight is that piecewise-linear motion of polytopic sets in $\Real^d$ lifts exactly to a static polytopic partition of spacetime $\Real^d \times \Real_{\geq 0}$. We show that the main result from the static case carries over to spacetime essentially without modification: a shortest path in the spacetime augmented \GCS provides an exact optimal solution to the motion planning problem with precedence specifications. Moreover, the layered subgraph lattice that encodes signal temporal logic (\STL) precedence specifications retains precise correspondence with the Bellman--Held--Karp (\BHK) algorithm, and the full library of restricted but expressive  \STL\ specifications generalizes to the dynamic setting. Further, the framework allows time-gated keys and doors, collectible or passable only within a temporal window. The geometric side of the framework requires two modifications: a causality constraint (the time coordinate of a trajectory must be non-decreasing) and a spacetime speed constraint (bounded physical velocity), both of which are convex in the B\'{e}zier control points.
\end{abstract}

\begin{keywords}
Motion planning, graphs of convex sets, signal temporal logic, dynamic environments, spacetime planning, B\'{e}zier trajectories, precedence specifications.
\end{keywords}

\section{Introduction}
\label{sec:introduction}

Motion planning in dynamic environments requires joint consideration of geometry and time. The problem becomes substantially more challenging when trajectories must also satisfy logical precedence specifications, such as collecting a key before passing through a door or visiting task regions in an admissible order. In such settings, planning must jointly determine a collision-free continuous trajectory, a feasible ordering of tasks, and the times at which the trajectory visits the corresponding task regions.

The graphs of convex sets (\GCS) framework provides a powerful approach for coupling discrete graph search with continuous trajectory optimization. Marcucci et al.~\cite{marcucci2024shortest} introduced a mixed-integer convex formulation of shortest paths through graphs whose vertices contain convex sets, together with a convex relaxation that is often tight in practice. The framework was subsequently applied to motion planning around obstacles~\cite{marcucci2023motion}, producing smooth, certifiably near-optimal trajectories. Extensions have considered multi-query planning~\cite{morozov2024multiquery}, contact-rich manipulation~\cite{graesdal2024contact}, non-Euclidean configuration spaces~\cite{cohn2025noneuclidean}, unified graph optimization formulations~\cite{marcucci2025unified}, multi-region routing and traveling salesman problem (TSP) variants~\cite{tang2026unified},and methods for improving scalability~\cite{natarajan2024implicit,luna2026augmented}.

\begin{figure}[!t]
    \centering
    \includegraphics[width=0.8\linewidth]{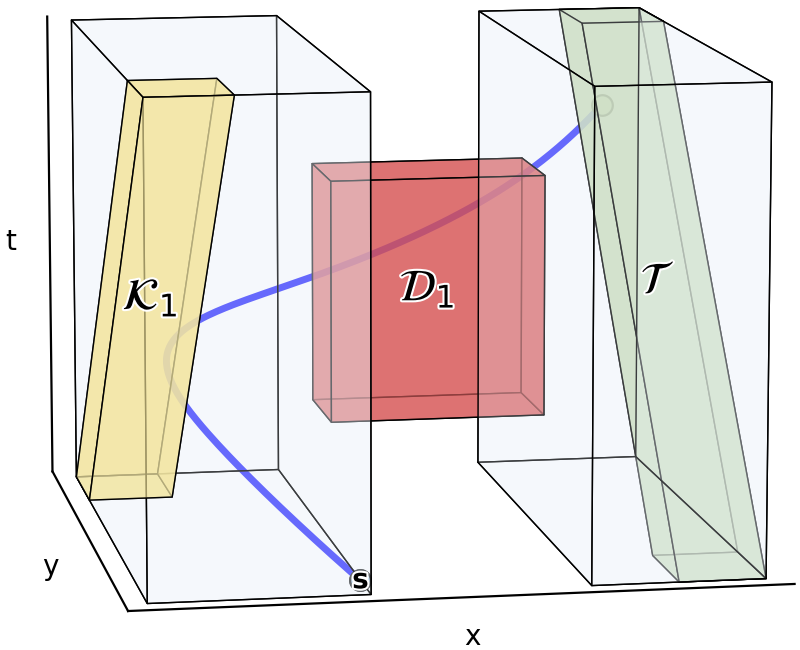}
    \caption{An optimal trajectory in a dynamic 2D environment, lifted to 3D spacetime. A key $K_1(t)$ must be \emph{intercepted} to access the door $D_1(t)$, then arrive at the target region $\mathcal{T}$. Key and target regions are moving, and the door is only accessible during certain time interval.}
    \label{fig:main_demo}
    \vspace{-10pt}
\end{figure}

Temporal-logic specifications have also been incorporated into GCS. Kurtz and Lin~\cite{kurtz2023temporal} construct a product between a temporal-logic automaton and a GCS, providing a general framework for LTL-constrained motion planning. More recently, Chen et al.~\cite{chen2026signal} formulate continuous-time STL motion planning by coupling timed automata with a convex decomposition of the configuration space, yielding a GCS formulation that jointly enforces logical progress, timing, smoothness, and velocity constraints while directly generating B\'ezier trajectories. In contrast, augmented GCS~\cite{you2025augmented,you2026augmented} directly encodes a restricted but expressive fragment of structured precedence specifications in the graph topology. The resulting layered subgraph construction simultaneously selects a feasible task ordering and optimizes the continuous trajectory, while avoiding a general product automaton construction. This yields singly exponential complexity in the number of task regions for the considered specification classes, and~\cite{you2026augmented} extends this framework to a systematic library of eight specification variations.

Several recent works study GCS in spacetime or incorporate kinodynamic constraints in a \emph{vanilla obstacle-avoidance} setting, without temporal logic specifications~\cite{tang2025space,osburn2025systematic}. Other work incorporates signal temporal logic (\STL) specifications into spacetime GCS, but considers static environments and requires a prescribed temporal window for each task region, enforcing task sequences by restricting the admissible free space over time~\cite{de2026stl}. Related optimization-based approaches encode temporal-logic specifications directly through mixed-integer formulations. In particular, Logic Network Flow formulations~\cite{lin2025optimization, lin2025towards} exploit network-flow structure inspired by GCS to obtain tighter convex relaxations and have been applied to systems with piecewise-affine dynamics and real-time replanning in changing
environments.
More broadly, dynamic obstacle avoidance has been extensively studied using sampling-based, model-predictive, reachability-based, and related methods. Our focus here is specifically on the connection between spacetime planning and precedence-specification GCS.

The main distinction from existing spacetime GCS methods is that we consider dynamic environments together with the precedence specifications encoded by augmented GCS. Existing spacetime GCS methods primarily address trajectory planning and obstacle avoidance in dynamic environments, whereas augmented GCS~\cite{you2025augmented,you2026augmented} considers logical task specifications in static environments. In this work, we combine these two capabilities by extending augmented GCS to spacetime, allowing moving obstacles, keys, and doors while preserving the specification structure of the static framework. This extension also enables explicitly time-dependent specifications, such as keys that can only be collected or doors that can only be traversed within prescribed temporal windows.

Our main observation is that a broad class of dynamic environments can be represented exactly using static convex geometry in spacetime. Specifically, a polytope in $\Real^d$ with fixed orientation and piecewise-linear translation and dilation over time lifts, on each linear segment, to a polytope in $\Real^{d+1}$. Consequently, the spacetime free set can be partitioned into convex cells and represented using the same GCS machinery employed in the static setting. This allows the augmented GCS construction and its specification library to be extended directly to dynamic environments.

The main contributions of this paper are:
\begin{enumerate}

    \item \textbf{Spacetime geometry.} We show that with piecewise-linearly translating and dilating polytopic regions, the resulting spacetime freespace set in $\Real^{d+1}$ can be constructed via an exact convex partition algorithm such as reverse search.

    \item \textbf{Spacetime augmented GCS construction and optimality.} We extend the augmented \GCS\ construction to dynamic environments and show that its layered logical structure is preserved in spacetime. In particular, the number and connectivity of logical subgraphs are unchanged; dynamics affect the underlying spacetime geometry rather than the specification structure. We establish that a shortest path in the augmented \GCS\ corresponds to an exact solution of the dynamic motion planning problem.

    \item \textbf{Dynamic and time-gated specifications.} The entire specification library extends to moving task regions, including time-gated and dynamic keys and doors that simultaneously move and whose logical transitions are enabled only within prescribed temporal windows.

    \item \textbf{Numerical experiments.} We demonstrate that the proposed framework successfully plans optimal trajectories satisfying STL specifications through dynamic environments with moving and time-gated task regions. Our experiment results also exhibits computational scaling governed by the complexity of the spacetime partition and the size of the resulting augmented GCS.
\end{enumerate}



\section{Problem Formulation}
\label{sec:problem}

\begin{table*}[!t]
    \centering
    \caption{The full specification library for the static case in \cite{you2026augmented} extends to dynamic environments in spacetime.}
    \begin{tabular}{ll}
        \multicolumn{1}{c}{Variation} & \multicolumn{1}{c}{Formula} \\ \hline
        \rule{0pt}{3ex}
        1. Mandatory key collection & $\varphi_{\mathcal{U}} = \bigwedge_{i=1}^{\nK} (\lnot D_i(t) \, \mathcal{U} \, K_i(t)) \land \mathcal{F}(\mathcal T)$ \\[3pt]
        2. Pure wayset collection (TSP) & $\varphi_{\mathrm{TSP}} = \bigwedge_{i=1}^{\nK} \mathcal{F}(K_i(t)) \land \mathcal{F}(\mathcal T)$ \\[3pt]
        3. Ordered key collection & $\varphi_{\mathrm{ord}} = (\lnot D_1(t) \, \mathcal{U} \, K_1(t)) \land \bigwedge_{i=2}^{\nK} \bigl(\lnot D_i(t) \, \mathcal{U} \, (K_{i-1}(t) \land \mathcal{F} K_i(t))\bigr) \land \mathcal{F}(\mathcal T)$ \\[3pt]
        4. Disjunctive keys & $\varphi_{\mathrm{disj}} = \bigwedge_{i=1}^{n_D} \bigl((K_i^A(t) \lor K_i^B(t)) \, \mathcal{R} \, \lnot D_i(t)\bigr) \land \mathcal{F}(\mathcal T)$ \\[3pt]
        5. Multi-key doors & $\varphi_{\mathrm{multi}} = \bigwedge_{i=1}^{n_D} \Bigl(\bigwedge_{j:\, K_j(t) \in \mathcal{K}_i(t)} K_j(t)\Bigr) \mathcal{R}\, \lnot D_i(t) \land \mathcal{F}(\mathcal T)$ \\[3pt]
        6. Multi-door keys & $\varphi_{\mathrm{tool}} = \bigwedge_{j=1}^{n_D} \bigl(K_{\sigma(j)}(t) \, \mathcal{R} \, \lnot D_j(t)\bigr) \land \mathcal{F}(\mathcal T)$ \\[3pt]
        7. Timed constraints & $\varphi_{\mathrm{time}} = \bigwedge_{i=1}^{\nK} \bigl(K_i(t) \, \mathcal{R}_{[a_i, b_i]} \, \lnot D_i(t)\bigr) \land \bigwedge_{i=1}^{\nK} \mathcal{F}_{[c_i, d_i]}(K_i(t)) \land \mathcal{F}(\mathcal T)$ \\[3pt]
        8. Conditional (if-then) & $\varphi_{\mathrm{cond}} = \bigwedge_{i=1}^{\nK}\Bigl(A_i \Rightarrow (K_i(t) \,\mathcal{R}\, \neg D_i(t)\Bigr) \wedge \mathcal{F}(\mathcal{T})$ \\ \hline
    \end{tabular}
    \vspace{-8pt}
    \label{tab:library}
\end{table*}

Consider a robot operating in a bounded environment $\X \subset \Real^{d}$ over a mission horizon $[0, T]$. The environment contains \emph{time-varying} obstacles $\mathcal{O} = \{O_i(t)\}_{i=1}^{\nO}$, time-varying door regions $\mathcal{D} = \{D_i(t)\}_{i=1}^{\nK}$, and time-varying key regions $\mathcal{K} = \{K_i(t)\}_{i=1}^{\nK}$. We assume a bijective key-door pairing: key $K_i$ unlocks door $D_i$ for $i = 1, \ldots, \nK$. The instantaneous obstacle-free space at time $t$ is $C(t) = X \setminus \bigcup_i O_i(t)$, and the instantaneous traversable space is $C(t) \cup K(t) \cup D(t)$, where $K(t) = \bigcup_i K_i(t)$ and $D(t) = \bigcup_i D_i(t)$. All sets are assumed to be polytopes at each instant with given half-space representations:
\begin{align*}
  X &= \{x \in \Real^d \mid H_{X} x \leq g_{X}\}, \\
  O_i(t) &= \{x \in \Real^d \mid H_{O_i}\, x \leq g_{O_i}(t)\}, \\
  D_i(t) &= \{x \in \Real^d \mid H_{D_i}\, x \leq g_{D_i}(t)\}, \\
  K_i(t) &= \{x \in \Real^d \mid H_{K_i}\, x \leq g_{K_i}(t)\}.
\end{align*}

\begin{assumption}[Piecewise-linear set motion]
We assume obstacle, door, and key sets to undergo continuous piecewise-linear motion, with each offset vector on the right-hand side of each half-space representation being piecewise-linear with up to $L$ breakpoints over the mission horizon $[0,T]$:
\[
0 = t_0 < t_1 < \cdots < t_L < t_{L+1} = T.
\]
and changing linearly within each interval $t \in [t_j,t_{j+1}]$, $j=0,\ldots,L$.
\end{assumption}

We discuss how piecewise-linear motion facilitates exact convex partitioning in spacetime in Section \ref{sec:geometry}.

\begin{remark}[Types of time-varying sets]
The formulation encompasses three qualitatively distinct dynamic behaviors. (i)~\emph{Moving obstacles}: other robots or pedestrians translate through $\X$; the offsets $g_{O_i}(t)$ encode centroid velocity. (ii)~\emph{Expanding or contracting obstacles}: a flood zone, fire perimeter, or restricted exclusion zone grows or shrinks; $g_{O_i}(t)$ encodes the expansion or contraction rate of each facet. (iii)~\emph{Moving keys and doors}: e.g., a mobile docking station or moving platform whose position must be intercepted at the right moment. 
\end{remark}

\subsection{Precedence Constraints via Signal Temporal Logic}

We use a fragment of untimed signal temporal logic (\STL) \cite{maler2004monitoring} to encode key-door precedence specifications over the time-varying environment. The base specification uses the release operator
\begin{equation}
  \varphi_{\mathrm{base}} =
    \bigwedge_{i=1}^{\nK} \bigl(K_i(t)\, \mathcal{R}\, \neg D_i(t)\bigr)
    \wedge\, \mathcal{F}(\mathcal{T}),
  \label{eq:phi_base}
\end{equation}
which requires $\neg D_i(t)$ to hold until $K_i(t)$ is visited (intercepting $K_i(t)$ releases the dynamic constraint on $D_i(t)$), while making key collection optional unless geometrically required. In the dynamic setting, the atomic propositions $K_i$ and $D_i$ are satisfied when the robot's spacetime position $(q(t), t)$ lies in the spacetime lift $\mathcal{K}_i = \{(x,t): x \in K_i(t),\, t \in [0,T]\}$ or $\mathcal{D}_i$, respectively.  The term $\mathcal{F}(\mathcal{T})$ requires the trajectory to eventually reach a target set $\mathcal T$, guaranteed by the terminal condition below.

\textbf{Time-gated keys.} A time-gated key $K_i^{[a_i, b_i]}$ is collectible only during the window $[a_i, b_i] \subseteq [0,T]$: the robot must be at $K_i(s)$ for some $t \in [a_i, b_i]$. Its spacetime lift is clipped to $\tilde{K}_i \cap (\Real^d \times [a_i, b_i])$. This captures, for example, a communication relay available only during a satellite pass, a biological sample that degrades outside a freshness window, or an authentication token valid only during a shift.

\textbf{Time-gated doors.} Similarly, a time-gated door $D_i^{[c_i, d_i]}$ is passable only during prescribed intervals; outside those intervals it acts as an ordinary obstacle. Its spacetime lift is clipped to the union of passable time windows. This captures, e.g.,  an airlock on a timed cycle, a tide-gated channel, or a moving platform that provides a crossing only when aligned.

The full specification library of STL variations established in\cite{you2026augmented} for static environments shown in Table~\ref{tab:library}, carries over to the dynamic setting by replacing spatial regions with their spacetime lifts, as detailed in Section \ref{sec:augmented}.

\enlargethispage{2\baselineskip}

\subsection{Optimization Problem}

Our goal is to find a trajectory $q$ over the horizon $[0, T]$ that solves
\begin{equation}
  \begin{aligned}
    \minimize_{q,\,T} \quad
      & \alpha \int_0^T \norm{\dot{q}(t)}_2\, dt + \beta T \\
    \text{subject to} \quad
      & q(t) \in C(t) \cup K(t) \cup D(t),
        \quad \forall\, t \in [0,T], \\
      & q(t) \models \varphi_{\mathrm{base}}(K(t), D(t)), \\
      & \dot{q}(t) \in \mathcal{Q},
        \quad \forall\, t \in [0, T], \\
      & \norm{\dot{q}(t)} \leq v_{\max},
        \quad \forall\, t \in [0,T], \\
      & q(0) = q_0,\quad q(T) = q_T, \\
      & \dot{q}(0) = \dot{q}_0,\quad \dot{q}(T) = \dot{q}_T.
  \end{aligned}
  \label{eq:problem2}
\end{equation}
The objective is a weighted sum of trajectory length and horizon duration with weights $\alpha, \beta \geq 0$. The first constraint requires the robot to remain in the instantaneous traversable space at all times. The second constraint enforces the \STL\ precedence specification over the spacetime regions $\K = \{K_i(t)\}$ and $\D = \{D_i(t)\}$. The third constraint requires the velocity to lie in the convex set $\mathcal{Q}$ at all times. The fourth constraint bounds the physical speed by $v_{\max}$; this is automatically satisfiable in the static case but becomes a genuine coupling constraint between space and time in the dynamic case. The remaining constraints specify initial and terminal conditions for position and velocity.

\begin{remark}[Reduction to static case]
When $O_i(t) \equiv O_i$, $K_i(t) \equiv K_i$, and $D_i(t) \equiv D_i$ for all $t$, problem~\eqref{eq:problem2} reduces to the static planning problem of\cite{you2026augmented}. The spacetime lifts become axis-aligned prisms $O_i \times [0,T]$, the speed constraint~$\norm{\dot{q}} \leq v_{\max}$ is inactive for sufficiently large $v_{\max}$, and the constructions in Sections \ref{sec:geometry}--\ref{sec:augmented} specialize to their static counterparts.
\end{remark}

\begin{remark}[Infinite-dimensional relaxation and B\'{e}zier parameterization]
Problem~\eqref{eq:problem2} is infinite-dimensional. As in the static case\cite{you2026augmented}, candidate trajectories are parameterized as piecewise \emph{spacetime} B\'{e}zier curves: the spatial component $q(s)$ and the temporal component $t(s)$ are both B\'{e}zier curves in the curve parameter $s \in [0,1]$, jointly forming a curve in $\Real^{d+1}$. This parameterization makes the problem finite-dimensional and admits convex constraints on higher-order derivatives. The speed constraint $\norm{\dot{q}(t)} \leq v_{\max}$ and the causality requirement $\dot{t}(s) \geq 0$ both reduce to convex (second-order cone and linear, respectively) constraints on the control points, as shown in Section\ref{sec:gcs}.
\end{remark}

\section{Spacetime Geometry}
\label{sec:geometry}
In this section, we show how piecewise-linear motion enables an exact convex partitioning in spacetime, so that the \GCS machinery can be applied directly to this spacetime partition. Under Assumption 1, consider for example obstacle $O_i(t)$ over a time interval between break points $t\in[t_{j},t_{k}]$ with $g_{O_i(t)} := g_i + H_{O_i} v_i (t - t_j) + u_i(t - t_j)$, where $v_i\in\mathbb{R}^d$ denote its translational velocity and $u_i\in\mathbb{R}^{m_i}$ the rate of change of its $m_i$ half-space offsets.
The dynamic obstacle region over interval $t\in[t_{j},t_{k}]$ is then
\[
O_i(t)
=
\left\{
x\in\mathbb{R}^d
\;\middle|\;
H_{O_i}\bigl(x-v_i(t-t_{j})\bigr)
\le
g_i+u_i(t-t_{j})
\right\}.
\]

Introducing the spacetime variable
\[
z =
\begin{bmatrix}
x\\t
\end{bmatrix}
\in\mathbb{R}^{d+1},
\]
and defining $A_i=H_{O_i}v_i+u_i$, the swept region over $[t_{j},t_{k}]$ has the static spacetime lift
\[
\mathcal{O}_i
=
\left\{
z\in\mathbb{R}^{d+1}
\;\middle|\;
H^{ST}_iz\le g^{ST}_i
\right\},
\]
where
\[
H^{ST}_i =
\begin{bmatrix}
H_{O_i} & -A_i\\
0_{1\times d} & -1\\
0_{1\times d} & 1
\end{bmatrix},
\qquad
g^{ST}
=
\begin{bmatrix}
g_i-A_it_{j}\\
-t_{j}\\
t_{k}
\end{bmatrix}.
\]
Thus, a convex polytope undergoing constant translation and affine motion of its half-space offsets lifts exactly to a convex polytope in spacetime. This model includes pure translation ($u=0$), isotropic or anisotropic dilation, and their combinations, as illustrated in Fig.~\ref{fig:dynamic_regions}.
\begin{figure}[t]
    \centering
    \begin{subfigure}[t]{0.4\linewidth}
        \centering
        \includegraphics[width=\linewidth]{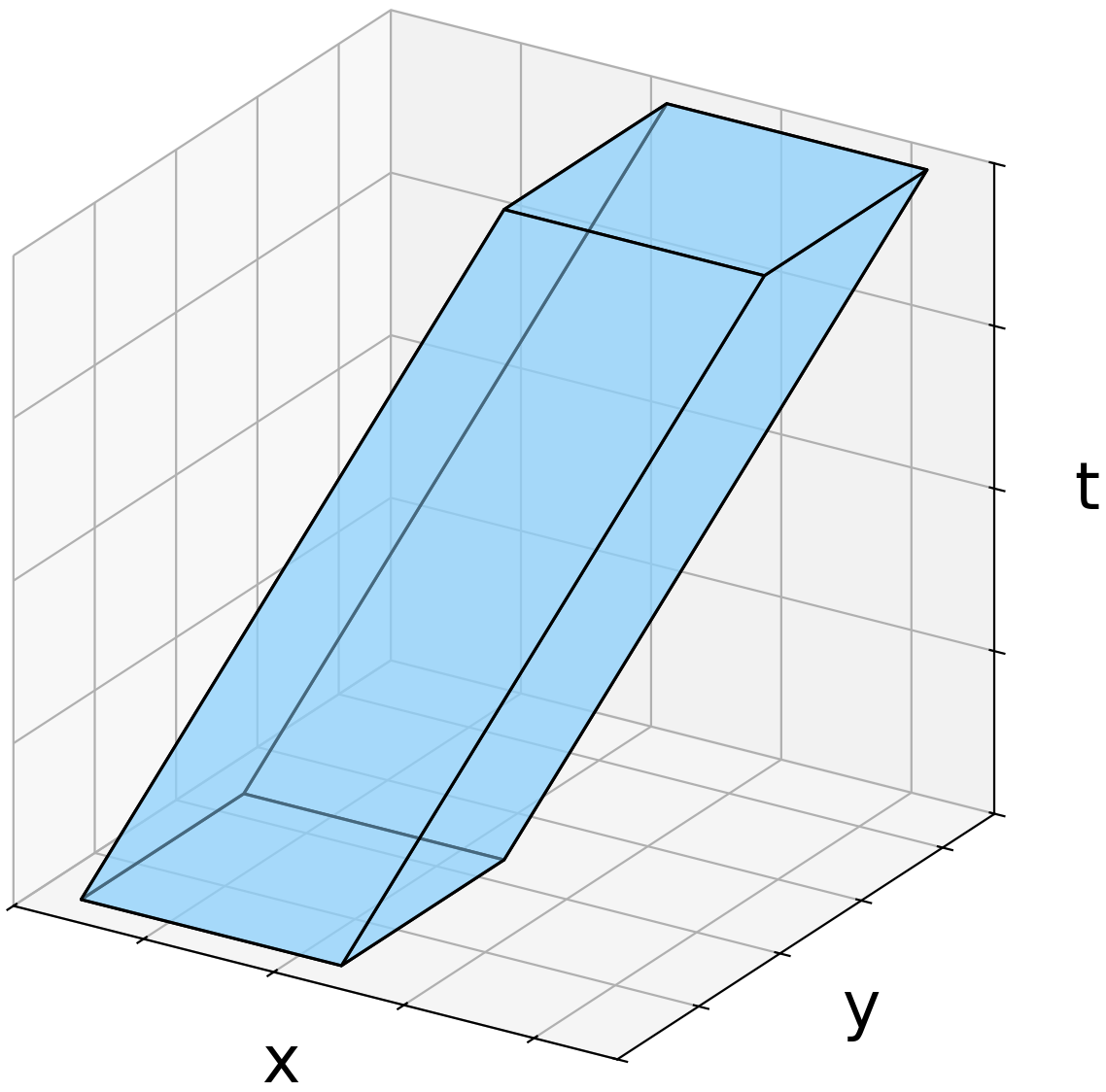}
        \caption{Pure translation}
        \label{fig:pure_translation}
    \end{subfigure}
    \hspace{0.08\linewidth}
    \begin{subfigure}[t]{0.4\linewidth}
        \centering
        \includegraphics[width=\linewidth]{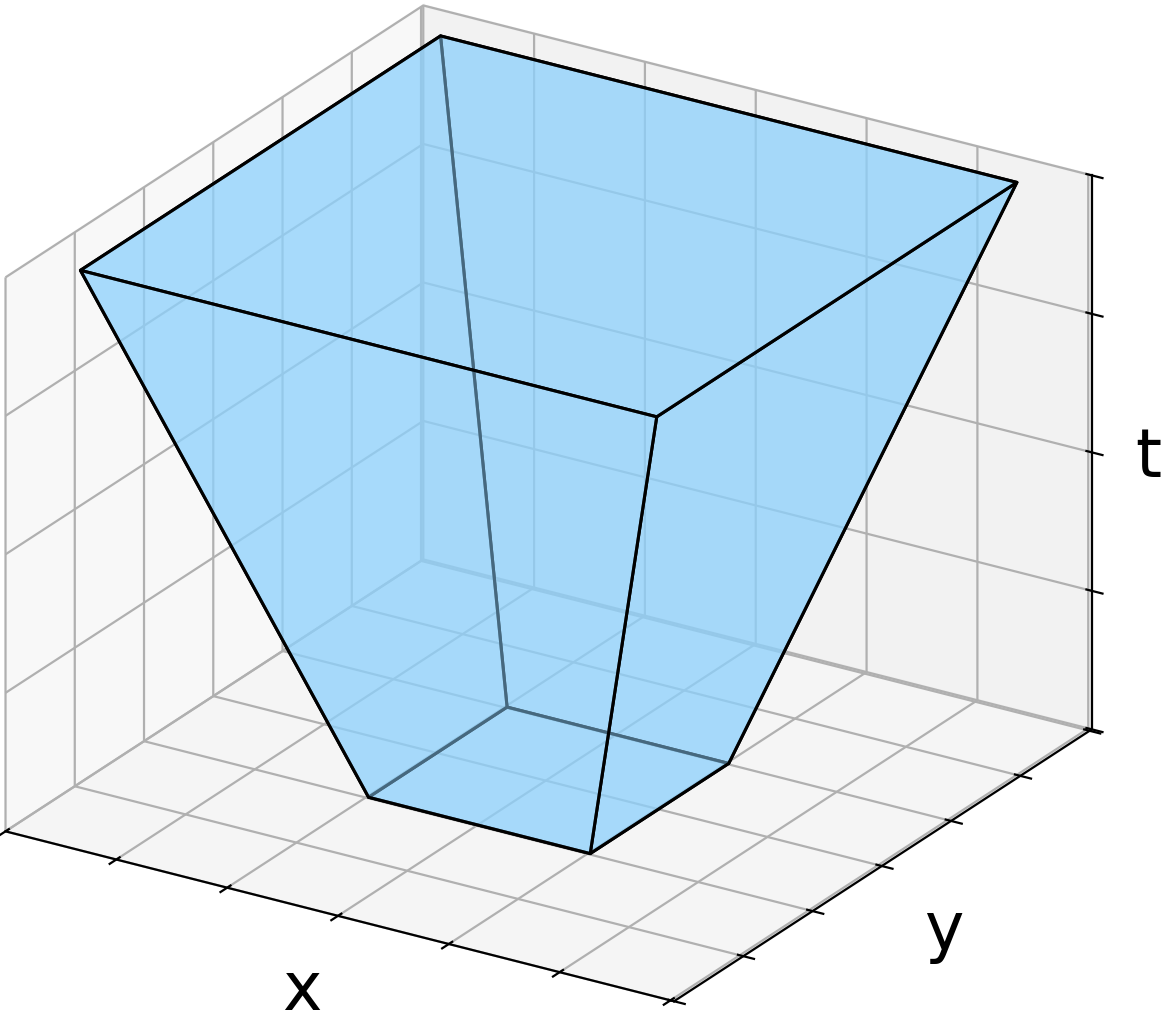}
        \caption{Isotropic dilation}
        \label{fig:isotropic_dilation}
    \end{subfigure}

    \begin{subfigure}[t]{0.4\linewidth}
        \centering
        \includegraphics[width=\linewidth]{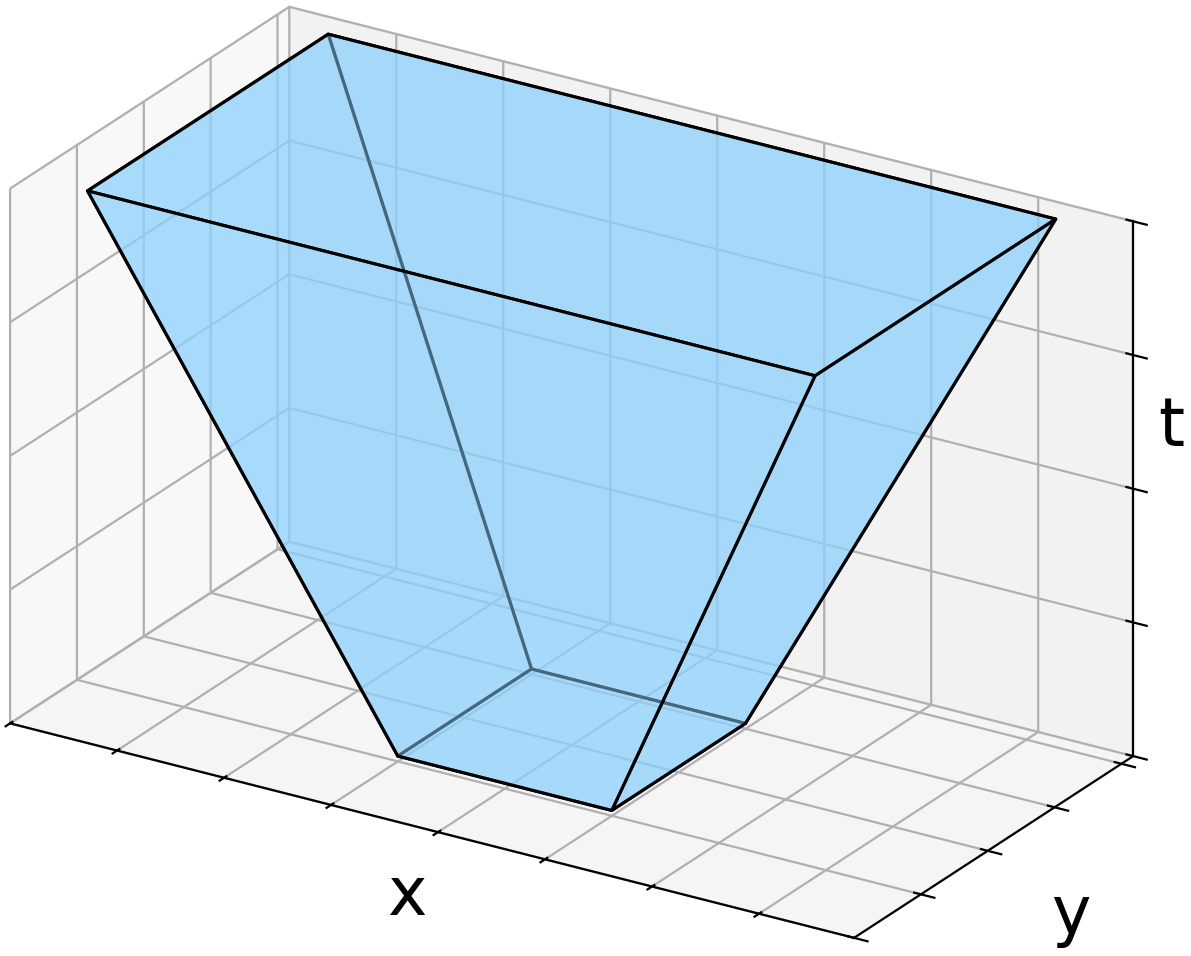}
        \caption{Anisotropic dilation}
        \label{fig:anisotropic_dilation}
    \end{subfigure}
    \hspace{0.08\linewidth}
    \begin{subfigure}[t]{0.4\linewidth}
        \centering
        \includegraphics[width=\linewidth]{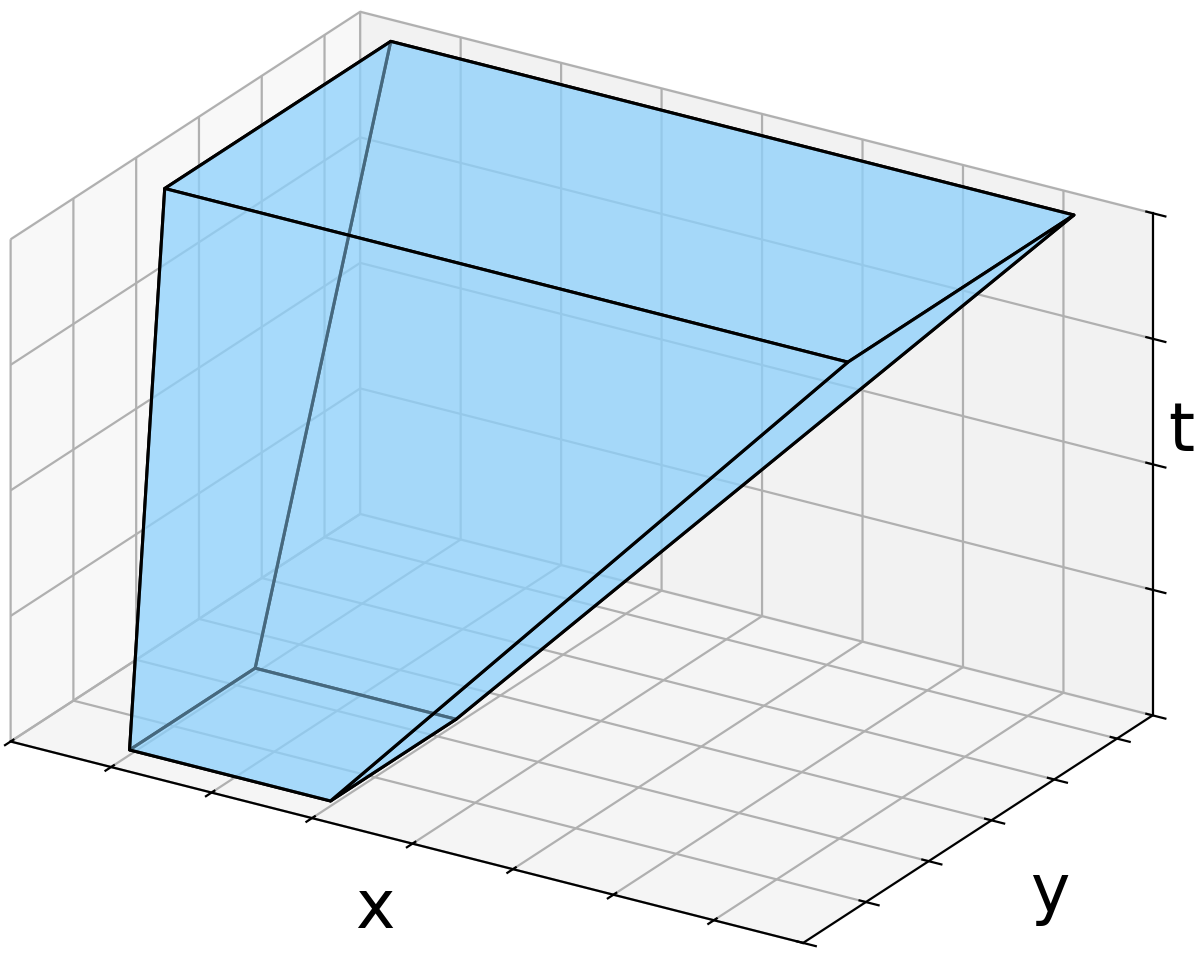}
        \caption{Combined motion}
        \label{fig:combined_motion}
    \end{subfigure}

    \caption{Spacetime lifts of a square undergoing different types of piecewise-linear motion.}
    \label{fig:dynamic_regions}
    \vspace{-10pt}
\end{figure}

The construction can be applied to each interval between breakpoints and for each time-varying obstacle, key, and door region. We denote the spacetime lifts of all obstacle, key, and door regions by $\mathcal{O} =  \{O_i(t)\}_{i=1}^{\nO}$, $\mathcal{K} =  \{ K_i(t)\}_{i=1}^{\nK}$, and $\mathcal{D} =  \{ D_i(t)\}_{i=1}^{\nK}$, respectively. This produces a finite collection of convex spacetime polytopes. Hence, piecewise-linear motion of obstacles, keys, and doors produces a spacetime free space region that admits an exact finite polytopic representation, discussed next.

\subsection*{Exact Convex Partition in Spacetime}
\label{sec:spacetime_partition}

The piecewise-linear motion and resulting polytopic spacetime representation allows the exact convex partitioning algorithm developed for the static setting (Algorithm 1 in~\cite{you2026augmented}, based on reverse search \cite{avis1996reverse}), to be applied directly in $\mathbb{R}^{d+1}$. The boundaries of the lifted environment, obstacles, and doors define a hyperplane arrangement whose free cells form an exact convex partition of the spacetime free space. As in the static case, adjacent cells may subsequently be merged while preserving convexity, and the resulting cells and their adjacencies define the spacetime GCS. Key regions, which do not restrict traversability, are incorporated afterward by labeling the spacetime cells that intersect them. Therefore, no modification to the partitioning algorithm is required; the transition from a static to a dynamic environment changes only the dimension and geometry of its input polytopes.

Fig.~\ref{fig:partition_example} illustrates this construction for a two-dimensional environment lifted to three-dimensional spacetime.

\begin{figure}[t]
    \centering
    \begin{subfigure}[t]{0.48\linewidth}
        \centering
        \includegraphics[width=\linewidth]{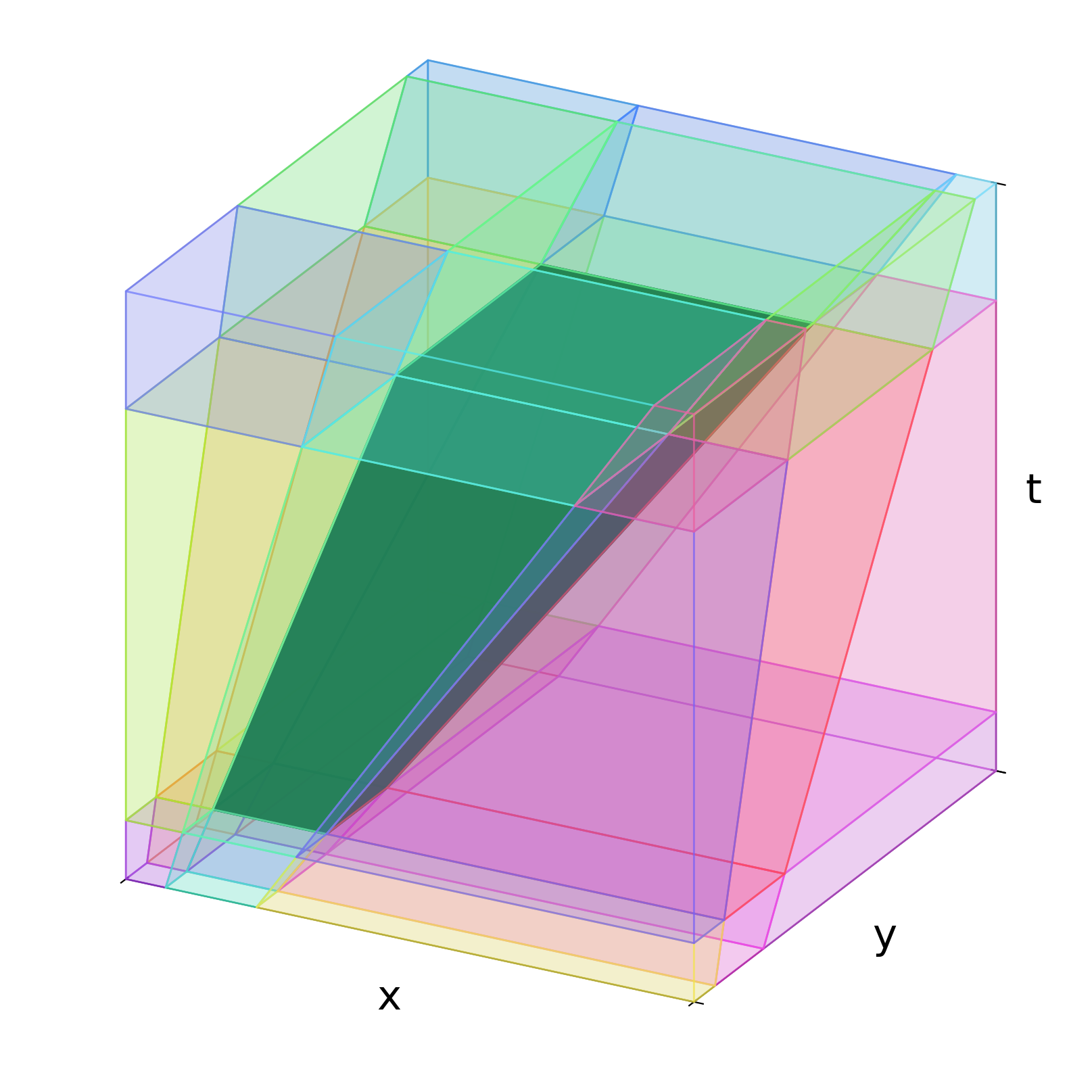}
        \caption{Before merging.}
        \label{fig:partition_unmerged}
    \end{subfigure}
    \begin{subfigure}[t]{0.48\linewidth}
        \centering
        \includegraphics[width=\linewidth]{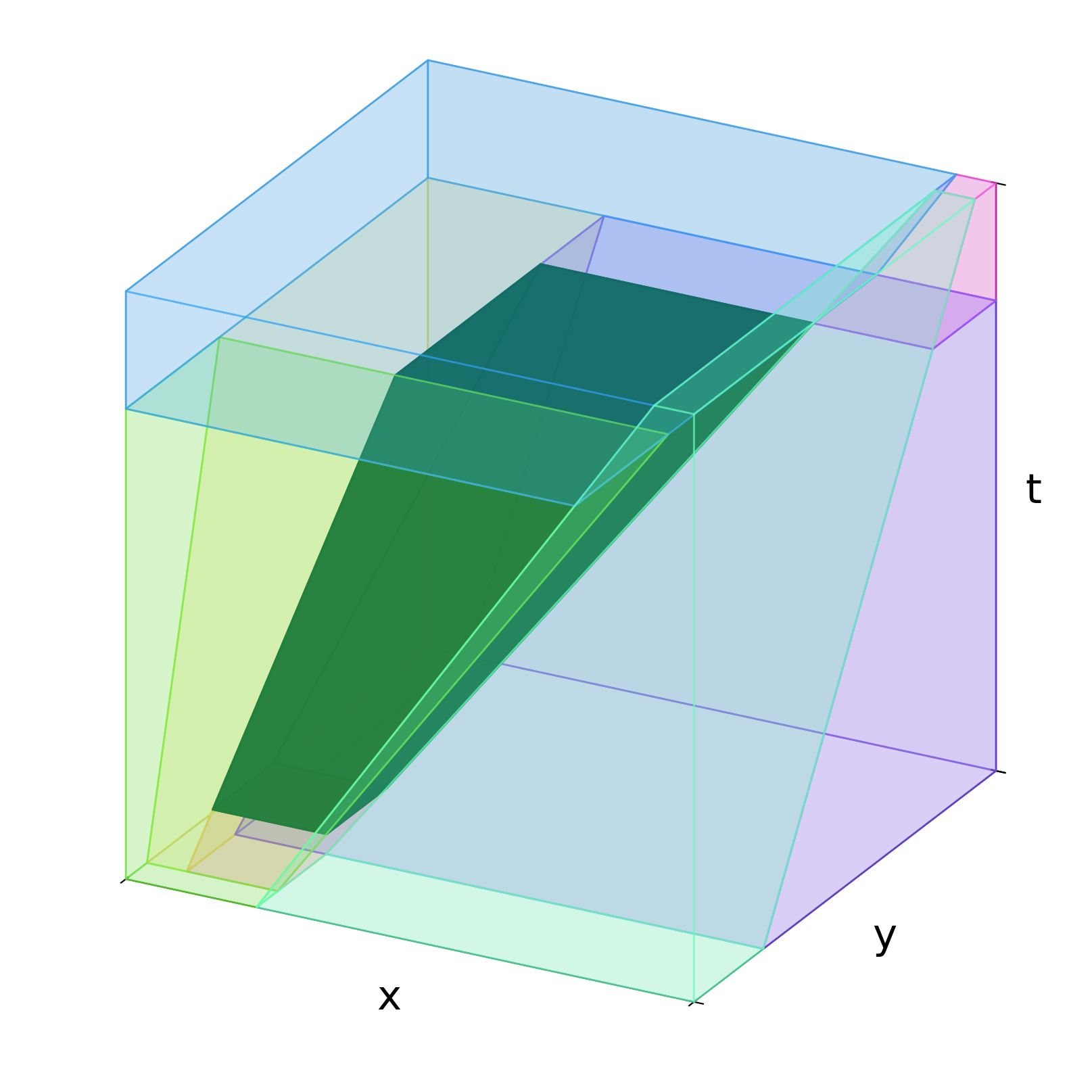}
        \caption{After merging.}
        \label{fig:partition_merged}
    \end{subfigure}
    \caption{Convex partition of a spacetime environment containing a translating and expanding obstacle.}
    \label{fig:partition_example}
    \vspace{-10pt}
\end{figure}

\begin{remark}
    The fixed orientation (constant facet normal matrices) excludes continuous rotation, even if the rotational motion is linear in time. Under rotation, the facet normals become time-varying and the resulting spacetime constraints are generally nonlinear in $(x,t)$; consequently, the swept region need not be convex or polytopic. Rotational motion is therefore not considered in this work.
\end{remark}

\section{Spacetime Augmented GCS}

\subsection{Spacetime Trajectory Parameterization}
\label{sec:gcs}

Given the spacetime partition constructed in Section~\ref{sec:geometry}, we form a spacetime GCS\ $\mathcal{G}=(\mathcal{V},\mathcal{E})$, where each vertex $v\in\mathcal{V}$ corresponds to a convex spacetime cell $\mathcal{P}_v\subset\mathbb{R}^{d+1}$ and edges connect adjacent cells. We adopt the spacetime trajectory parameterization used in prior work~\cite{de2026stl}, with each trajectory segment represented by a degree-$n$ B\'ezier curve
\[
\gamma(s)
=
\begin{bmatrix}
r(s)\\
h(s)
\end{bmatrix}
=
\sum_{i=0}^{n} B_{i,n}(s)
\begin{bmatrix}
p_i\\
t_i
\end{bmatrix},
\qquad s\in[0,1],
\]
where $B_{i,n}(s)=\binom{n}{i}s^i(1-s)^{n-i}$ is the $i$th Bernstein basis polynomial of degree $n$, and $(p_i,t_i)\in\mathbb{R}^{d+1}$ are the spacetime control points, with $p_i\in\mathbb{R}^d$ and $t_i\in\mathbb{R}$ denoting their spatial and temporal coordinates, respectively.

The spacetime B\'ezier curve can therefore be decomposed into a spatial
curve and a temporal curve,
\[
r(s)=\sum_{i=0}^{n}B_{i,n}(s)p_i,
\qquad
h(s)=\sum_{i=0}^{n}B_{i,n}(s)t_i.
\]
The spatial curve $r(s)$ determines the geometric shape of the trajectory in $\mathbb{R}^d$, while the temporal curve $h(s)$ determines its time scaling, mapping the curve parameter $s$ to physical time.

To preserve causality and ensure that this mapping is invertible, $h(s)$
must be strictly increasing. Its derivative with respect to $s$ is
\[
h'(s) =
n\sum_{i=0}^{n-1}B_{i,n-1}(s)(t_{i+1}-t_i).
\]
We enforce strict monotonicity by imposing
\[
t_{i+1}-t_i\geq\epsilon,
\qquad i=0,\ldots,n-1,
\]
for some $\epsilon>0$. Since the Bernstein basis functions are nonnegative and form a partition of unity, this guarantees $h'(s)\geq n\epsilon>0$ for all $s\in[0,1]$, and hence $h(s)$ is strictly increasing. The corresponding physical trajectory is
$q(t)=r(h^{-1}(t))$, with
\[
\dot q(t)=\frac{r'(s)}{h'(s)},
\qquad s=h^{-1}(t).
\]

For a segment assigned to a spacetime cell
$\mathcal{P}_v$, all spacetime control points are constrained to lie
in $\mathcal{P}_v$. By the convex-hull property of B\'ezier curves,
this guarantees that the entire segment remains within the corresponding
collision-free spacetime region. Velocity bounds, boundary conditions,
continuity constraints, and trajectory costs are imposed on the spatial and
temporal components using the standard convex formulations
of~\cite{marcucci2023motion,de2026stl}. Thus, once the dynamic environment has
been converted into a convex spacetime partition, no new trajectory
parameterization is required. The output is a labeled graph of spacetime convex sets $\mathcal{G}(\mathcal{V}, \mathcal{E})$ with vertex partition $\mathcal{V} = \{\mathcal{V}_\mathcal{C}, \mathcal{V}_\mathcal{K}, \mathcal{V}_\mathcal{D} \}$ corresponding to free space, key, and door regions. The polytopes corresponding to $\mathcal{V}_\mathcal{C}$ form an exact convex partition of the spacetime free space, and the edge set encodes connectivity among free-space, key, and door regions.

\begin{remark}[Graph connectivity and dynamic feasibility]
\label{rem:dynamic_feasibility}
An important distinction from static GCS is that graph connectivity alone does not guarantee dynamic feasibility in the spacetime GCS. A discrete path may traverse a spacetime cell whose spatial extent, relative to its temporal extent, requires a velocity exceeding the prescribed bound. Consequently, a source-to-target path may exist in the spacetime GCS while the corresponding continuous trajectory how to optimization is infeasible.
\end{remark}

\subsection{Spacetime Augmented GCS}
\label{sec:augmented}

The spacetime GCS introduced in Section~\ref{sec:gcs} encodes collision-free motion in a dynamic environment by representing every obstacle-free spacetime cell as a graph vertex. To additionally enforce temporal logic precedence specifications, we construct an \emph{augmented spacetime GCS} using exactly the same layered construction developed for the static environment. The only difference is that each subgraph is now a copy of the \emph{spacetime} GCS rather than the spatial GCS.

As in the static formulation, every subgraph corresponds to one reachable subset of collected keys. Let $S \subseteq \mathcal{K}$ denote the set of keys that have already been collected. The corresponding subgraph is denoted by
\[
\mathcal{G}_S=
(\mathcal{V}_S,\mathcal{E}_S),
\]
where $\mathcal{V}_S=\mathcal{V}$, that is, every subgraph is an identical copy of the base spacetime GCS. The difference between subgraphs lies only in the enabled door transitions.

Consequently, the combinatorial structure of the augmented graph is unchanged from the static case; only the geometry represented by each graph vertex has changed from a spatial convex region to a spacetime convex polytope.

\paragraph{Base Layer ($S=\emptyset$)}

The base layer corresponds to the state in which no keys have been collected.
The vertex set is identical to the base spacetime GCS, so $\mathcal{V}_{\emptyset}= \mathcal{V}$,
while all edges incident to door regions are removed
\[
\mathcal{E}_{\emptyset}=
\mathcal{E}
\setminus
\left\{
(u,v)\in\mathcal{E}
\mid
u\in\mathcal{V}_{\mathcal{D}}
\text{ or }
v\in\mathcal{V}_{\mathcal{D}}
\right\}.
\]

Therefore, every door remains inaccessible before its corresponding key has been collected, even if the door itself becomes geometrically reachable in spacetime. A start node is instantiated and connected to a free space vertex whose corresponding set contains the initial condition.

\paragraph{Intermediate Layers}
For every reachable key subset $S\subseteq\mathcal K$, a copy of the spacetime GCS is created. Its vertex set is again $\mathcal{V}_S=\mathcal{V}$ while its edge set is
\[
\mathcal{E}_S=
\mathcal{E}_{\emptyset}
\cup
\left\{
(u,v)\in\mathcal{E}
\mid
u\in\mathbf D(S)
\text{ or }
v\in\mathbf D(S)
\right\},
\]
where $\mathbf D(S)$ denotes the set of door regions unlocked by the keys contained in $S$.

Whenever a trajectory reaches the spacetime key region corresponding to key $\mathcal{K}_i$, a zero-cost transition is inserted from the copy of that key vertex in $\mathcal{G}_{S\setminus\{i\}}$ to the corresponding key vertex in $\mathcal{G}_{S}$. This edge represents the logical event of collecting a key and advancing the specification state. Unlike the static case, reaching a key now requires satisfying both its spatial and temporal conditions simultaneously. Since every key is represented by a spacetime polytope, the transition between subgraphs is enabled only when the
trajectory enters that spacetime region. Consequently, time-gated keys and moving keys require no modification to the augmentation algorithm; they are handled entirely through the geometry of the underlying spacetime GCS.

\paragraph{Final Layer}
The final layer corresponds to the full key set $S=\mathcal K$. 
Its edge set equals the complete spacetime GCS $\mathcal{E}_{\mathcal K}= \mathcal{E}$,
since every door has been unlocked. As in the static construction, zero-cost transitions connect every predecessor subgraph containing all but one key to
the final subgraph.
\paragraph{Target Merging}
Let the target spacetime region be represented by a vertex
\[
\mathcal{T}\in\mathcal{V}_{\mathcal{C}}.
\]
All copies of this target vertex are merged into a single terminal node of the augmented spacetime GCS.

We denote the spacetime augmented \GCS by $\hat{\mathcal{G}} = (\hat{\mathcal{V}}, \hat{\mathcal{E}})$, where the vertex and edge sets contains all corresponding copies across all subgraphs $\hat{\mathcal{V}} = \cup_{S} \mathcal{V}_S$ and $\hat{\mathcal{E}} = \cup_{S} \mathcal{E}_S$. Our main results generalize the augmented \GCS framework to the dynamic spacetime setting.

\begin{theorem}[Optimality of Spacetime Augmented \GCS Shotest Path]
\label{thm:optimality}
A shortest path from start node $s$ to the merged target node $t$ in the spacetime augmented  GCS $\hat{\mathcal{G}} = (\hat{\mathcal{V}}, \hat{\mathcal{E}})$ corresponds to an optimal solution of problem \eqref{eq:problem2}, up to a finite parameterization of the trajectory $q$ by B\'ezier curves.
\end{theorem}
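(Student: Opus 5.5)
The plan is to prove the statement by establishing a cost-preserving bijection between two sets. The first is the set of feasible Bézier-parameterized solutions of problem \eqref{eq:problem2}. The second is the set of feasible $s$--$t$ paths in $\hat{\mathcal{G}}$ together with their continuous decision variables (spacetime control points in each visited cell). Once both directions are shown and the costs are shown to agree, the minimizers coincide and the theorem follows. The argument mirrors the static-case optimality proof of \cite{you2026augmented}. The only new ingredients are the exactness of the spacetime partition from Section~\ref{sec:geometry} and the convexity of the causality and speed constraints from Section~\ref{sec:gcs}.

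\textbf{Direction 1: graph path $\Rightarrow$ feasible trajectory.} Take a feasible path in $\hat{\mathcal{G}}$ with control points chosen in each visited polytope.
\begin{enumerate}
\item By the convex-hull property, each Bézier segment lies in its spacetime cell. Since the cells of $\mathcal{V}_\mathcal{C}$, keys and doors exactly partition the lifted traversable set, $(q(t),t)$ is traversable for all $t$.
\item The constraints $t_{i+1}-t_i\ge\epsilon$ make $h$ invertible, so $q=r\circ h^{-1}$ is well defined.
\item The speed bound $\|r'(s)\|\le v_{\max}h'(s)$ is imposed through its sufficient conic form on the derivative control points, so $\|\dot q\|\le v_{\max}$. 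The set $\mathcal{Q}$ and the boundary conditions are handled the same way.
\item For the specification, track the layer index $S$ along the path. A door vertex $D_i$ is entered only in a layer with $i\in S$. The index $i$ enters $S$ only through a transition edge located at a copy of the $\mathcal{K}_i$ vertex. Hence the trajectory visited $\mathcal{K}_i$, including any time-gate clipping, at a time no later than its first entry into $\mathcal{D}_i$. This is exactly $K_i\,\mathcal{R}\,\neg D_i$.
\item Reaching the merged target gives $\mathcal{F}(\mathcal{T})$.
\end{enumerate}

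\textbf{Direction 2: feasible trajectory $\Rightarrow$ graph path.} Take a feasible piecewise-Bézier solution whose segments each lie in a single cell (after refinement at cell crossings, which is allowed within the parameterization class). Read off the sequence of cells visited by $(q(t),t)$. Because $t$ is increasing and the sets are closed polytopes sharing facets, consecutive cells are adjacent, so the sequence is a walk in $\mathcal{G}$. Lift it to $\hat{\mathcal{G}}$ as follows: start in $\mathcal{G}_\emptyset$, and each time a key region $\mathcal{K}_i$ is visited for the first time, take the zero-cost transition into $\mathcal{G}_{S\cup\{i\}}$. The STL formula guarantees that every door cell visited is already unlocked in the current layer, so each edge used exists in $\mathcal{E}_S$.

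\textbf{Cost equivalence.} Transition edges have zero cost. The remaining cost terms, $\alpha\int\|\dot q\|\,dt=\alpha\int_0^1\|r'(s)\|\,ds$ and $\beta T=\beta(h(1)-h(0))$ summed over segments, are imposed identically on vertex variables in both formulations. Hence corresponding objects have equal cost, and a shortest path attains the optimum of \eqref{eq:problem2} over the Bézier class.

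\textbf{Expected main obstacle.} The delicate step is Direction 2, specifically that the trajectory's cell sequence is a valid graph walk. A trajectory might pass through a lower-dimensional intersection of several cells, such as a vertex or edge of the arrangement. It might also touch a key or door cell only on its boundary. To handle this, I would first use the closedness of the polytopes together with the convention that the edges of $\mathcal{E}$ include all cells with nonempty intersection. I would then use the monotone time coordinate to argue that each crossing instant assigns consistent adjacent cells. Remark~\ref{rem:dynamic_feasibility} cautions that connectivity does not imply feasibility. This caution affects only the existence of a solution, not the correspondence, since infeasible cells simply admit no control points.
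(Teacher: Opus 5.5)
Your overall strategy is the static-case argument that the paper invokes: a cost-preserving correspondence between $s$--$t$ paths in $\hat{\mathcal{G}}$ and trajectories, with the layer index recording collected keys. Direction~1 is essentially fine, but Direction~2 fails as written. Every convex constraint attached to a vertex is only \emph{sufficient} for the matching pointwise constraint of \eqref{eq:problem2}:
\begin{enumerate}
\item control points in $\mathcal{P}_v$ for the curve staying in $\mathcal{P}_v$;
\item $t_{i+1}-t_i\geq\epsilon$ for $h'>0$;
\item the cone bound on derivative control points for $\|r'(s)\|\leq v_{\max}h'(s)$;
\item derivative control points in $\mathcal{Q}$ for $\dot q\in\mathcal{Q}$.
\end{enumerate}
You note this yourself for the speed bound, yet Direction~2 needs the converse. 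A segment can stay inside a cell while an interior control point lies outside it. It can also satisfy the speed bound at every $s$ while violating the control-point cone. Refining at cell crossings does not help. The de Casteljau control points of the pieces are convex combinations of the original ones, which were spread over several cells. The temporal gaps also shrink with the subdivision ratio, possibly below $\epsilon$. So the map from feasible piecewise-B\'ezier solutions of \eqref{eq:problem2} into graph solutions is undefined on part of that set, and your claimed bijection is false. The same caveat applies to your cost equality if the length term is implemented, as in \cite{marcucci2023motion}, by the control-polygon upper bound on $\int_0^1\|r'(s)\|\,ds$. The repair is to define the comparison class as the piecewise-B\'ezier trajectories whose segments satisfy these control-point constraints cell by cell. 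This is what ``up to a finite parameterization'' has to mean. Within that class the geometric half of Direction~2 holds by definition, and only your layer-lifting step carries real content.

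Second, a \GCS shortest path visits each vertex of $\hat{\mathcal{G}}$ at most once, with one B\'ezier segment per vertex. Direction~2, however, only produces a \emph{walk}, because a trajectory can leave a spacetime cell and re-enter it within the same layer. You must either build ``at most one segment per (cell, layer)'' into the class or show that revisits can be removed without loss. Replacing the excursion by the spacetime chord through the convex cell works for \eqref{eq:problem2} itself:
\begin{enumerate}
\item the chord is causal;
\item its speed is at most the average speed, hence at most $v_{\max}$;
\item its velocity is an average of $\dot q$, so it lies in $\mathcal{Q}$ when $\mathcal{Q}$ is closed and convex;
\item it increases neither length nor duration.
\end{enumerate}
However, it generally breaks derivative continuity at the junctions, so in the B\'ezier setting building the restriction into the class is the safe choice. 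The paper's proof just defers to the static case under the qualifier in the statement. Since you assert an explicit bijection, you need to state both restrictions explicitly.
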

\begin{proof}
Identical to the proof for the static case in \cite{you2026augmented}, with the augmented \GCS replaced by the spacetime augmented \GCS. 
\end{proof}

\begin{theorem}[Structural Bellman-Held-Karp Correspondence and Singly Exponential Size]
\label{thm:spacetime-augmentation-complexity}
The layer structure and inter-layer direction edges of the spacetime augmented \GCS implement the subset enumeration and recurrence relation of the Bellman-Held-Karp algorithm \cite{bellman1962dynamic,held1962dynamic}, generalized from given discrete transition cost to continuous time-varying geometry.
For $\nK$ keys, the spacetime augmented \GCS contains at most $2^{\nK}$ subgraphs, one for each reachable key subset. 
\end{theorem}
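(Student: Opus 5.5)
The plan is to argue structurally, in two parts: first the counting bound, then the correspondence with the Bellman--Held--Karp recurrence. For the counting bound, I will note that by the construction in Section~\ref{sec:augmented}, subgraphs are indexed injectively by key subsets $S \subseteq \mathcal{K}$. Only reachable subsets are instantiated, and each instantiated subgraph $\mathcal{G}_S$ is a copy of the base spacetime \GCS $\mathcal{G}$ with vertex set $\mathcal{V}_S = \mathcal{V}$. The number of subgraphs is therefore at most $|2^{\mathcal{K}}| = 2^{\nK}$. It follows that $|\hat{\mathcal{V}}| \le 2^{\nK}|\mathcal{V}|$ and $|\hat{\mathcal{E}}| \le 2^{\nK}|\mathcal{E}| + \nK 2^{\nK-1}|\mathcal{V}_{\mathcal{K}}|$. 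The inter-layer term counts transitions from $S\setminus\{i\}$ to $S$: there is one per pair $(S,i)$ with $i \in S$, and one per copy of a key vertex. The size is thus singly exponential in $\nK$ and linear in the size of the spacetime partition. Dynamics and time gating enter only through $|\mathcal{V}|$ and $|\mathcal{E}|$ and leave the exponent unchanged.

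For the Bellman--Held--Karp correspondence, I will define a value function $J(S, v)$ as the optimal cost of a trajectory from the start node that ends at the copy of spacetime vertex $v$ in $\mathcal{G}_S$. The claim to prove is that shortest-path optimality in $\hat{\mathcal{G}}$ satisfies the recurrence
\[
J(S, k_i) = \min\Bigl\{ J(S\setminus\{i\}, k_i),\ \min_{\text{paths in } \mathcal{G}_S} \bigl[ J(S,u) + c(u \to k_i) \bigr] \Bigr\},
\]
and also satisfies $J(S,v) = \min_{u} [J(S,u) + c(u\to v)]$ within a layer. Here the first term comes from the zero-cost inter-layer edge. This matches the Held--Karp recurrence $C(S,i) = \min_{j \in S\setminus\{i\}} [C(S\setminus\{i\}, j) + d(j,i)]$ once $d(j,i)$ is replaced by the optimal intra-layer continuous transfer cost from key region $j$ to key region $i$. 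That intra-layer problem is solved over the spacetime cells of $\mathcal{G}_S$, with only the doors in $\mathbf{D}(S)$ enabled.

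The main step is to justify this replacement of the discrete cost $d(j,i)$ by a continuous cost. In the static case, the transfer cost between key regions depends only on positions. In spacetime it depends on the entry point $(p,t)$ in the spacetime key polytope, and this point is a continuous decision variable shared across layers through B\'ezier continuity constraints. I would handle this by lifting the Held--Karp state to $(S, i, z)$ with $z \in \mathcal{K}_i$, so that the min over $j$ becomes a min over $j$ together with an infimum over exit points $z' \in \mathcal{K}_j$. I would then observe that the shortest path over $\hat{\mathcal{G}}$ performs this joint minimization exactly.

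The step requiring the most care is checking that the inter-layer edges, together with the causality constraint $t_{i+1}-t_i \ge \epsilon$, make every path visit layers in order of increasing $S$ and increasing time. Monotonicity in $S$ holds because edges only go from $S\setminus\{i\}$ to $S$. Monotonicity in time holds because of causality. Together these make layer transitions coincide with key-collection events, so that the subset lattice is traversed exactly as in the BHK recurrence. With this in place, the optimality and correctness statement reduces to Theorem~\ref{thm:optimality}.
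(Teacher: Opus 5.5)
Your counting argument is the paper's: subgraphs are indexed by subsets $S\subseteq\mathcal{K}$, so there are at most $2^{\nK}$ of them, with equality when every subset is reachable. The vertex and edge bounds you add are harmless extras.

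For the Bellman--Held--Karp part, the paper gives no new argument. It observes that the layered combinatorics of $\hat{\mathcal{G}}$ are exactly those of the static construction: one copy of $\mathcal{G}$ per subset, door edges enabled by $\mathbf{D}(S)$, and zero-cost edges from $\mathcal{G}_{S\setminus\{i\}}$ to $\mathcal{G}_S$ at key vertices. Only the convex set attached to each vertex changes, from a polytope in $\Real^d$ to one in $\Real^{d+1}$, so the static proof carries over verbatim. Your explicit value function $J(S,v)$, with lifted state $(S,i,z)$, is a more self-contained route and makes the ``continuous time-varying geometry'' clause concrete. However, the obstacle you single out as the main step is not specific to spacetime. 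Already in the static case, the transfer cost between key regions depends on continuous entry and exit points, and time is simply one more coordinate of $z$. That is why the paper's one-line reduction suffices.

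There is one concrete slip you need to fix. In matching $C(S,i)=\min_{j\in S\setminus\{i\}}[C(S\setminus\{i\},j)+d(j,i)]$, the transfer from key region $j$ to key region $i$ happens in $\mathcal{G}_{S\setminus\{i\}}$, with only the doors in $\mathbf{D}(S\setminus\{i\})$ enabled. The vertex $k_i$ is reached inside layer $S\setminus\{i\}$, and only then does the zero-cost edge move the path to $\mathcal{G}_S$. Your own recurrence says this once $J(S\setminus\{i\},k_i)$ is expanded by the within-layer equation. Placing the transfer in $\mathcal{G}_S$ with $\mathbf{D}(S)$ enabled, as you wrote, would allow crossing $D_i$ before $K_i$ is collected, which is precisely what the layer edge sets are built to forbid.

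A smaller point: if you want the lifted recurrence to be an exact dynamic program rather than a structural correspondence, the state must carry all junction data coupled by the edge continuity constraints, not just the point $z$. For example, it needs the derivative control points of $r$ and $h$ when $\dot q$ must be continuous. Your claim that the shortest path in $\hat{\mathcal{G}}$ performs the joint minimization exactly is unaffected.
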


\begin{proof}
Identical to the proof for the static case in \cite{you2026augmented}, with the augmented \GCS replaced by the spacetime augmented \GCS. 
Since each subgraph is indexed by a subset $S\subseteq\mathcal K$, there are at most $|2^{\mathcal K}|=2^{\nK}$ subgraphs, which is achieved when all keys are reachable from the start node. 
\end{proof}

Algorithm~\ref{alg:spacetime_augmented_gcs} summarizes the overall planning
pipeline. The dynamic geometry is first converted into a static convex
decomposition of spacetime. The resulting spacetime \GCS is then augmented
to encode the precedence specification and solved jointly for its discrete
path and continuous trajectory.

\begin{algorithm}[t]
\caption{Spacetime Augmented GCS Planning}
\label{alg:spacetime_augmented_gcs}
\begin{algorithmic}[1]
\REQUIRE Environment domain $X$, dynamic obstacles $\{O_i(t)\}$,
         doors $\{D_i(t)\}$, keys $\{K_i(t)\}$, planning horizon $[0,T]$,
         and precedence specification $\varphi_{\mathrm{base}}$
\ENSURE An optimal trajectory $q^*(t)$ for Problem \eqref{eq:problem2}

\STATE Lift $X$, $\{O_i(t)\}$, $\{D_i(t)\}$, and $\{K_i(t)\}$ into
       spacetime polytopes $\mathcal{X}$, $\{\mathcal{O}_i\}$,
       $\{\mathcal{D}_i\}$, and $\{\mathcal{K}_i\}$. (Section \ref{sec:geometry})

\STATE Partition the spacetime free space into convex regions and merge
       adjacent regions to construct the base spacetime \GCS, and associate a graph of convex sets $\mathcal{G}(\mathcal{V}, \mathcal{E})$ with vertex partition $\mathcal{V} = \{\mathcal{V}_\mathcal{C}, \mathcal{V}_\mathcal{K}, \mathcal{V}_\mathcal{D} \}$ with a vertex for each free space, key, and door region, and edges representing adjacency amongst regions. (Algorithm 1 in \cite{you2026augmented})

\STATE Construct the augmented spacetime GCS  $\hat{\mathcal{G}} = (\hat{\mathcal{V}}, \hat{\mathcal{E}})$, exactly encoding the precedence specification
       $\varphi$. (Section ~\ref{sec:augmented})

\STATE Solve the resulting shortest-path problem over the augmented
       spacetime GCS to obtain $q^*(t)$. (via mixed-integer convex reformulation from \cite{marcucci2024shortest,marcucci2023motion})

\RETURN An optimal trajectory $q^*(t)$ satisfying $\varphi_{\mathrm{base}}$.
\end{algorithmic}
\end{algorithm}

\section{Numerical Experiments}
\label{sec:experiments}

In this section, we demonstrate and evaluate the augmented GCS framework for solving simple and complex dynamic environment with temporal logic. Experiments were performed on a laptop with an Apple M2 CPU and 8GB RAM. The underlying convex optimization solver was MOSEK \cite{mosek}, accessed via the GCS components from the Drake \cite{drake} Python bindings. We use the rounding implementation from \cite{marcucci2023motion} with 100 trials and 10 max paths. 

\subsection{Untimed Dynamic Key-Door Problem}
\label{subsec:2key2door}
The first problem we consider is an environment with two keys and two doors, the goal is to reach the target within 5 seconds, so the corresponding signal temporal logic is
\[
\varphi=
\bigwedge_{i=1}^{2}
\left(
K(t)_i
\;\mathcal{R}\;
\neg D(t)_i
\right)
\;\land\;
\mathcal{F}_{[0,6]}(\mathcal{T}).
\]
More specifically, in this example both key sets and one obstacle move over time. In both subplots of Fig.~\ref{fig:2key2door_solutions}, we show the initial positions of all labeled sets together with the regions they sweep over the entire planning horizon.

The two subplots illustrate the effect of different objective functions. Figure~\ref{fig:2key2door_distance} shows the optimal trajectory obtained when only the total trajectory length is minimized, corresponding to $\alpha=1$ and $\beta=0$ in Problem~\ref{eq:problem2}. As a result, the optimizer uses the entire 6-second time horizon to reach the target and intentionally moves slowly at the beginning, effectively “waiting” for the moving key sets to approach before collecting them.

In contrast, Fig.~\ref{fig:2key2door_time} shows the optimal trajectory obtained when only the total travel time is minimized, corresponding to $\alpha=0$ and $\beta=1$. Rather than waiting for the key sets to move closer, the robot sacrifices path length and follows an S-shaped trajectory to intercept the moving key sets earlier. Consequently, it completes the task in approximately 3 seconds. Left column of Table~\ref{tab:2key2door_time} shows the computation time for this problem.

\begin{figure}[t]
    \centering
    \begin{subfigure}{0.48\linewidth}
        \centering
        \includegraphics[width=\linewidth]{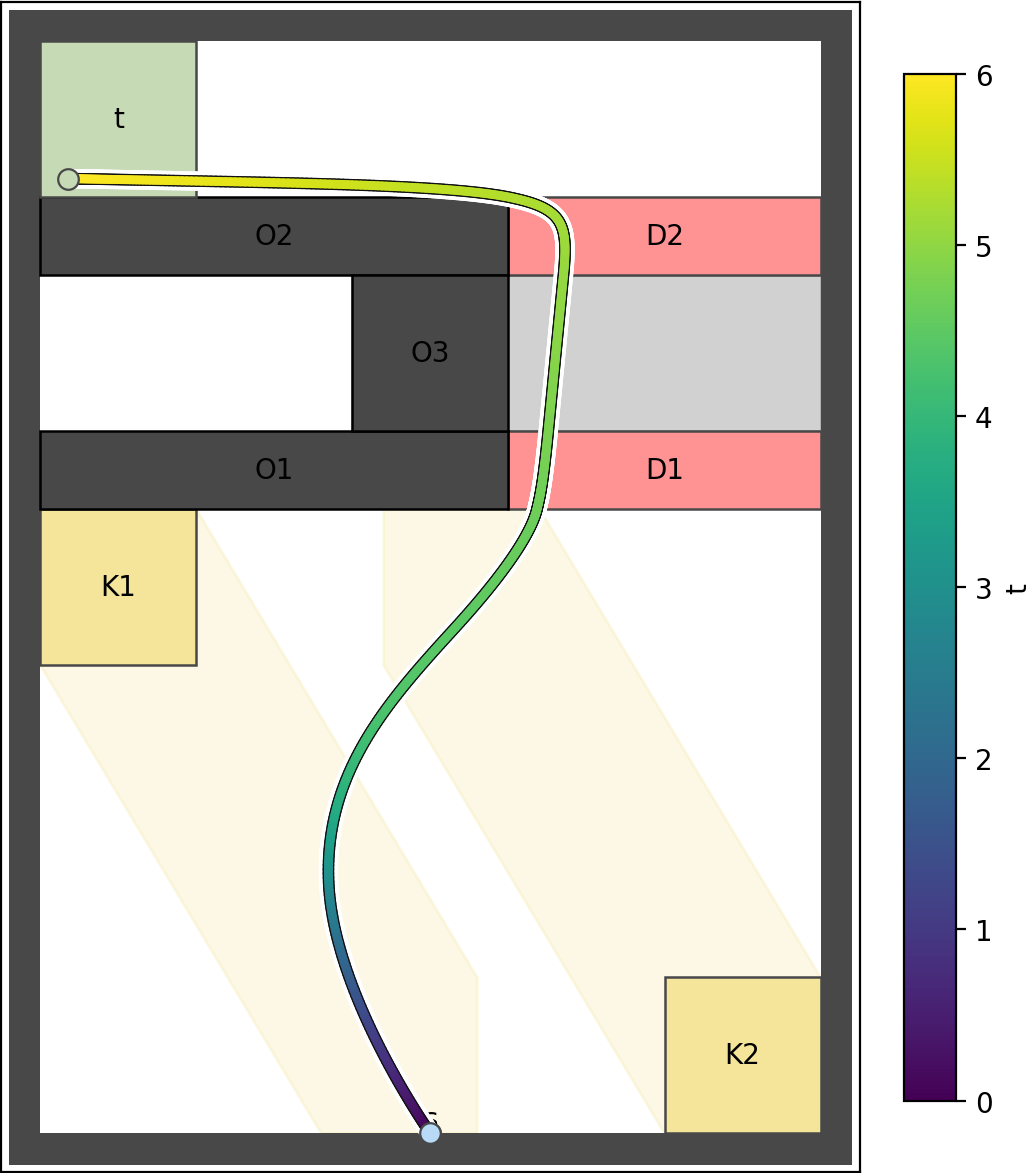}
        \caption{}
        \label{fig:2key2door_distance}
    \end{subfigure}
    \hfill
    \begin{subfigure}{0.48\linewidth}
        \centering
        \includegraphics[width=\linewidth]{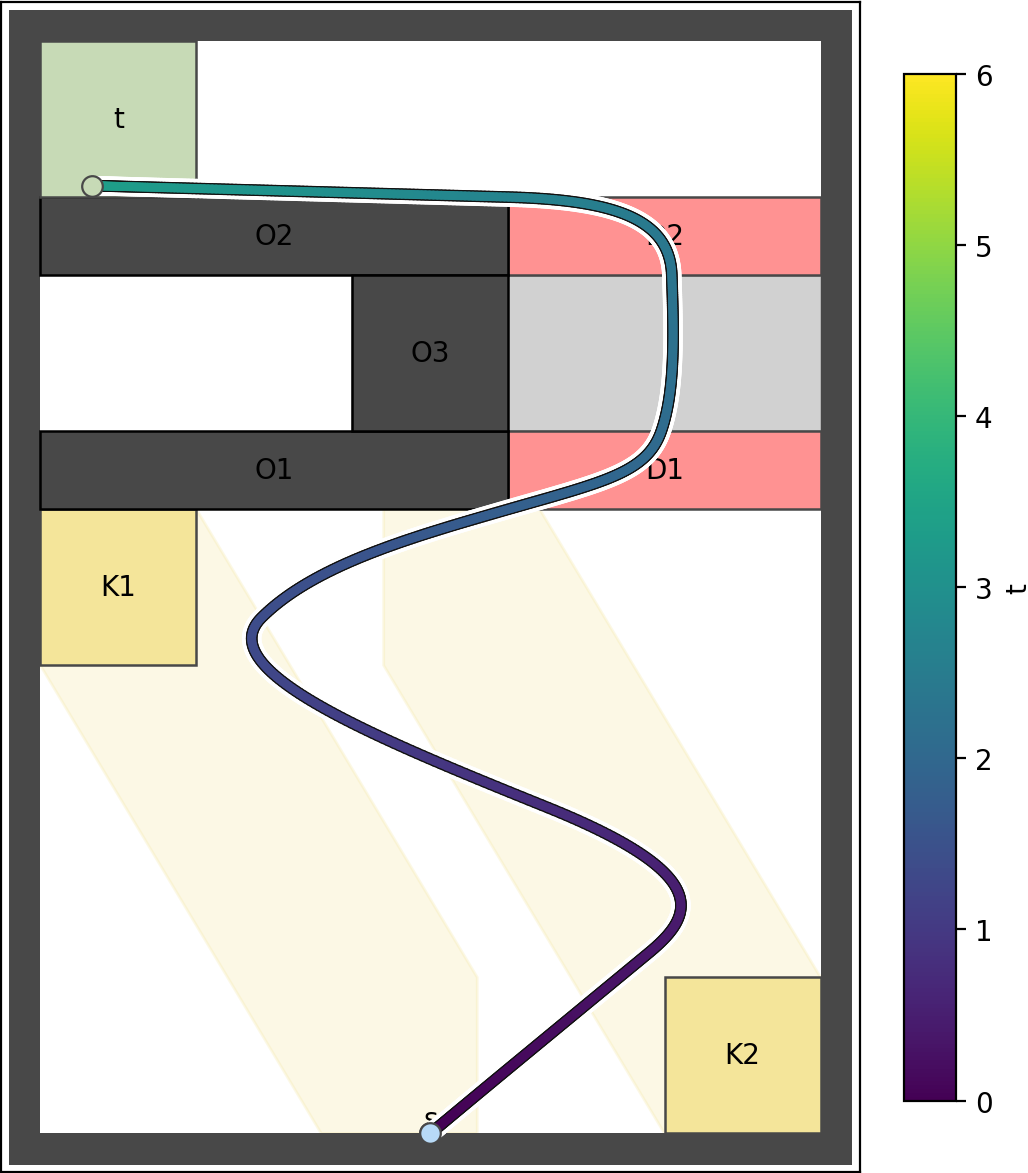}
        \caption{}
        \label{fig:2key2door_time}
    \end{subfigure}
    \caption{Optimal trajectories of a dynamic environment with 2 moving keys and 2 doors minimizing (a) total distance; (b) total travel time.}
    \label{fig:2key2door_solutions}
    \vspace{-10pt}
\end{figure}

\subsection{Time-gated Dynamic Key-Door Problem}
\label{subsec:2key2door_timed}
In this subsection, we show a modified experiment of Section~\ref{subsec:2key2door}, the environment remains the same except that $\mathcal{K}_1$ appears during time interval $t=[0,4]$ and $\mathcal{K}_1$ appears during time interval $t=[2,6]$. The \STL specification is
\[
\begin{aligned}
\varphi ={}&
\mathcal{F}_{[2,4]}(K(t)_1)
\land
(K(t)_1 \mathcal{R} \neg D(t)_1)
\\
&{}\land
\mathcal{F}_{[2,6]}(K(t)_2))
\land
(K(t)_2 \mathcal{R} \neg D(t)_2)
\\
&{}\land
\mathcal{F}_{[0,6]}(\mathcal{T}).
\end{aligned}
\]
The right column of Table~\ref{tab:2key2door_time} shows the computation time for this problem, and Fig.~\ref{fig:timed_2key2door_solutions} shows the optimal  trajectories.

\begin{figure}[t]
    \centering
    \begin{subfigure}{0.48\linewidth}
        \centering
        \includegraphics[width=\linewidth]{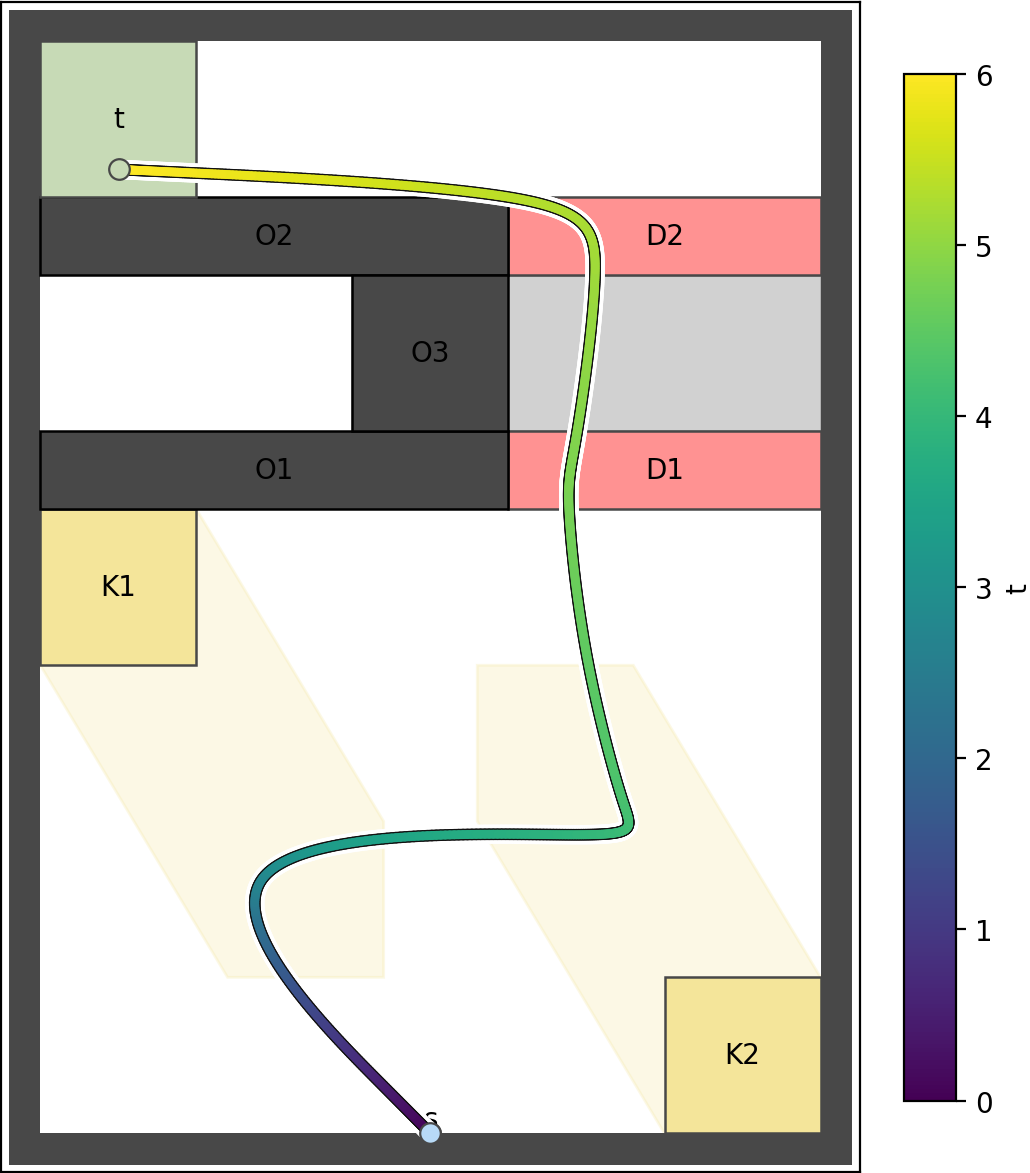}
        \caption{}
        \label{fig:timed_2key2door_distance}
    \end{subfigure}
    \hfill
    \begin{subfigure}{0.48\linewidth}
        \centering
        \includegraphics[width=\linewidth]{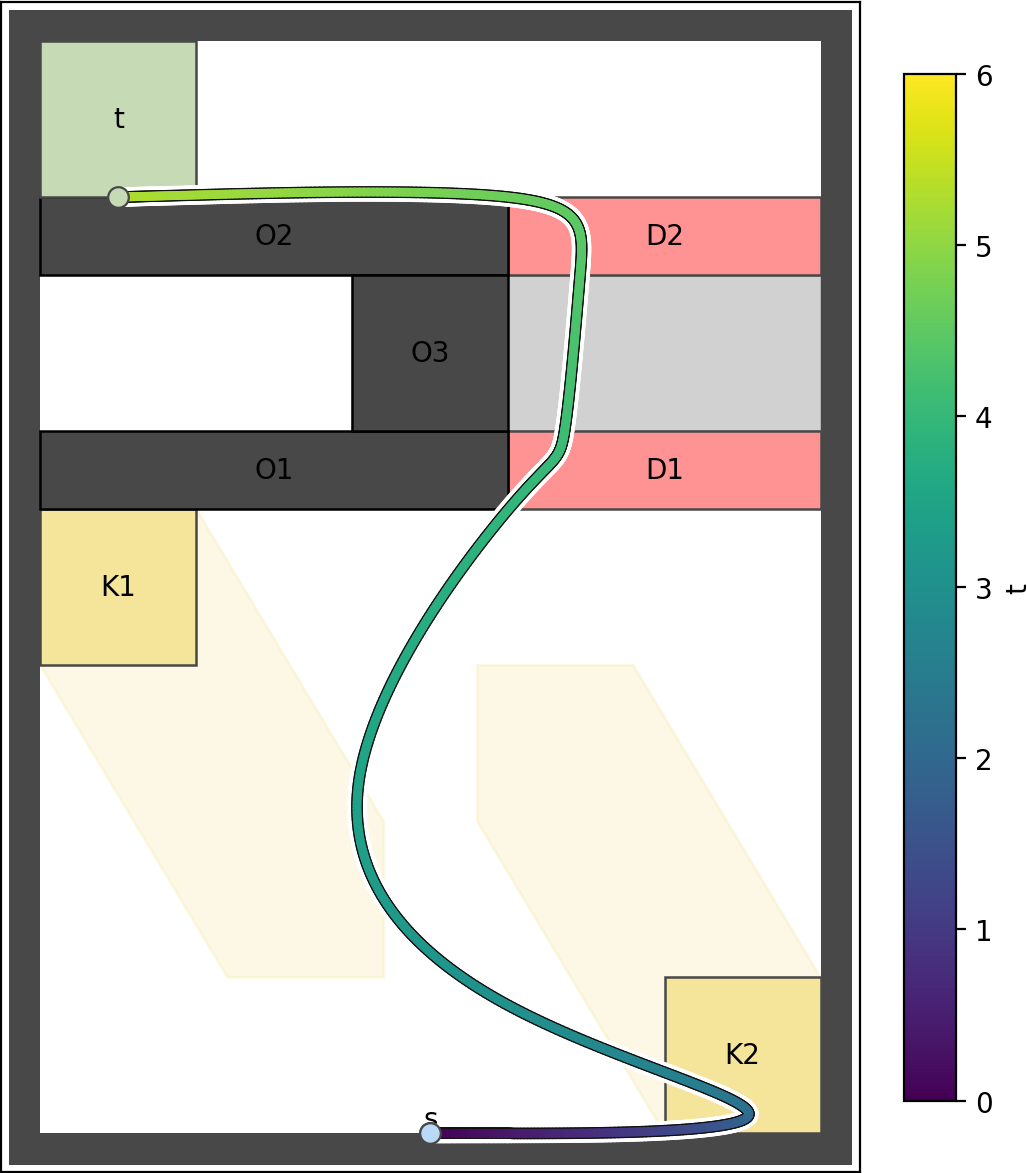}
        \caption{}
        \label{fig:timed_2key2door_time}
    \end{subfigure}
    \caption{Optimal trajectories of a dynamic environment with 2 moving and time-gated keys and 2 doors minimizing (a) total distance; (b) total travel time.}
    \label{fig:timed_2key2door_solutions}
    \vspace{-5pt}
\end{figure}

\begin{table}[t]
    \centering
    \caption{Computation time comparison for untimed and timed task regions}
    \label{tab:2key2door_time}
    \begin{tabular}{lrr}
    \hline
    & \multicolumn{2}{c}{Computation Time [s]} \\
    Stage & Untimed & Time-gated \\
    \hline
    \rule{0pt}{2ex}
    Form $\mathcal{G}(\mathcal{V}, \mathcal{E})$    & 0.250    & 0.285 \\
    Form $\hat{\mathcal G} (\hat{\mathcal V}, \hat{\mathcal E})$  & 0.0003 & 0.0004 \\
    Solve & 0.387    & 0.363 \\
    \hline
    Total           & 0.638    & 0.649 \\
    \end{tabular}
    \vspace{-10pt}
\end{table}

\begin{table*}[!t]
    \centering
    \caption{Scalability results for the augmented spacetime GCS framework.}
    \label{tab:scalability}
    \begin{tabular}{
    >{\centering\arraybackslash}m{0.03\textwidth}
    >{\centering\arraybackslash}m{0.03\textwidth}
    >{\centering\arraybackslash}m{0.055\textwidth}
    >{\centering\arraybackslash}m{0.055\textwidth}
    >{\centering\arraybackslash}m{0.055\textwidth}
    >{\centering\arraybackslash}m{0.10\textwidth}
    >{\centering\arraybackslash}m{0.12\textwidth}
    >{\centering\arraybackslash}m{0.07\textwidth}
    >{\centering\arraybackslash}m{0.07\textwidth}
    }
    \toprule
    \multirow{2}{*}{ID}
    & \multirow{2}{*}{\makecell{\# of\\Keys}}
    & \multirow{2}{*}{\makecell{\# of\\Subgraphs}}
    & \multirow{2}{*}{$|\hat{\mathcal{V}}|$}
    & \multirow{2}{*}{$|\hat{\mathcal{E}}|$}
    & \multicolumn{4}{c}{Computation Time [s]} \\
    \cmidrule{6-9}
    & & & & &
    Form $\mathcal{G}(\mathcal{V}, \mathcal{E})$
    & Form $\hat{\mathcal G} (\hat{\mathcal V}, \hat{\mathcal E})$
    & Solve
    & Total \\
    \midrule
    1 & 2 & 4  & 53  & 105  & 0.351 & 0.008 & 0.757  & 1.116 \\
    2 & 2 & 4  & 96  & 228  & 0.675 & 0.007 & 1.531  & 2.213 \\
    3 & 4 & 9  & 239 & 749  & 0.741 & 0.009 & 3.227  & 3.977 \\
    4 & 4 & 9  & 445 & 1430 & 1.312 & 0.011 & 12.390 & 13.713 \\
    5 & 6 & 14 & 496 & 1540 & 1.427 & 0.003 & 15.471 & 16.901 \\
    6 & 6 & 14 & 886 & 3939 & 5.400 & 0.026 & 35.930 & 41.356 \\
    \hline
    \end{tabular}
    \vspace{-5pt}
\end{table*}

\subsection{Dynamic Environment without Precedence Specification}

Before evaluating the scalability of our approach, we first demonstrate its ability to handle dynamic environments without logical precedence constraints and illustrate the role of velocity constraints in determining the feasibility of a spacetime GCS problem. We consider a one-dimensional dynamic environment over the time horizon $t\in[0,17]$, where $x(t)\in\mathbb{R}$ denotes the position of the robot. The robot starts at $x(0)=0$ and is required to reach the target region $X_{\mathcal{T}}=[9,9.5]$ within the prescribed time horizon. Four spatial regions are inaccessible during specified time intervals:
\[
\begin{aligned}
X_1 &= [0.5,1.5], & \mathcal{I}_1 &= [0,2],\\
X_2 &= [2.5,3.5], & \mathcal{I}_2 &= [2,4],\\
X_3 &= [4.75,5.75], &
\mathcal{I}_3 &= [4,6]\cup[6.5,8.5]\cup[9.5,11.5],\\
X_4 &= [7,8], & \mathcal{I}_4 &= [15,17].
\end{aligned}
\]
where $\mathcal{I}_i$ denotes the time intervals during which $X_i$ is
inaccessible. Let $\mu_i := (x\in X_i)$,
the resulting STL specification is
\[
\varphi =
\left(
\bigwedge_{i=1}^{4}
\mathcal{G}_{\mathcal{I}_i}\neg\mu_i
\right)
\land
\mathcal{F}_{[0,17]}(X_{\mathcal{T}}),
\]
where $\mathcal{G}_{\mathcal{I}_i}$ denotes that the condition holds over all intervals contained in $\mathcal{I}_i$.

Three trajectories are shown in Fig.~\ref{fig:time_gated_plot} with different maximum velocity $v_{\text{max}}$ allowed. This example illustrates the interaction between dynamic obstacle availability and the robot's velocity limit. A larger velocity bound allows the robot to traverse narrow temporal openings before they close. As $v$ decreases, some of these openings become dynamically unreachable, forcing the trajectory to use later feasible passages. For a sufficiently restrictive velocity bound, the spatial connectivity visible in the spacetime environment no longer corresponds to a dynamically feasible continuous trajectory, and the specification cannot be satisfied within the prescribed time horizon, this result calls back to Remark~\ref{rem:dynamic_feasibility}.

\begin{figure}[t]
    \centering
    \includegraphics[width=0.6\linewidth]{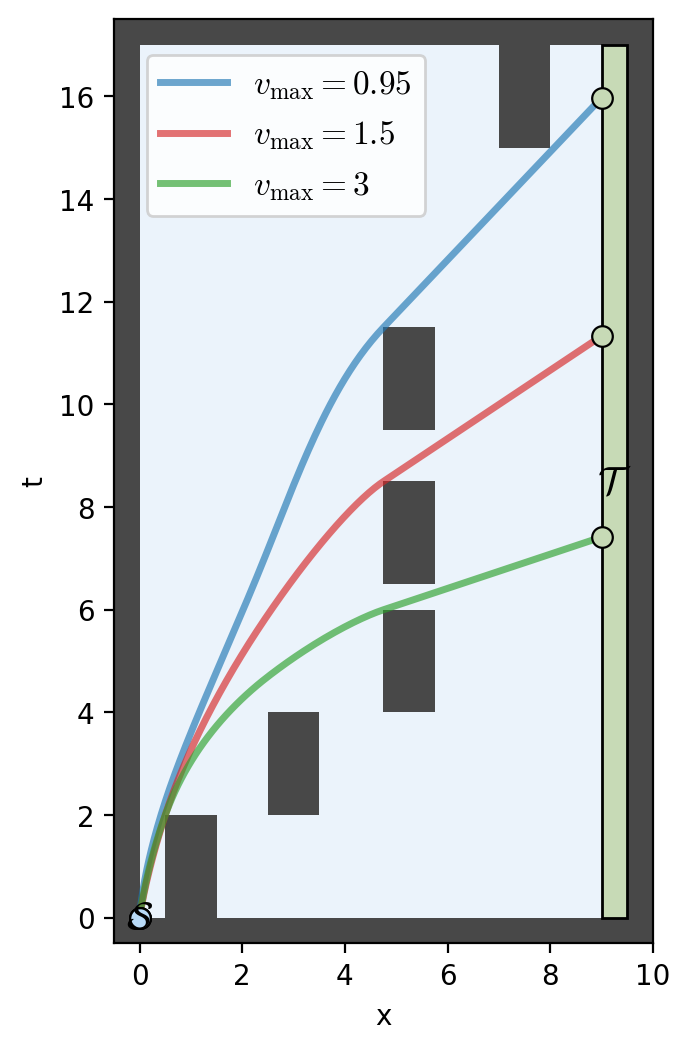}
    \caption{Dynamic obstacle avoidance in spacetime with one space dimension. The plot shows the quickest and shortest path to traverse on the $x$-axis with different maximum velocity constraint.}
    \label{fig:time_gated_plot}
    \vspace{-10pt}
\end{figure}

\subsection{Scalability of Spacetime Augmented GCS}

To evaluate the scalability of the proposed algorithm, we conduct three pairs of experiments, namely Experiments 1–2, 3–4, and 5–6, as summarized in Table~\ref{tab:scalability}. The two experiments within each pair share the same key–door configuration but differ in the complexity of obstacle motion. In the first experiment of each pair, each obstacle moves in a single direction throughout the entire time horizon, whereas in the second, each obstacle changes its direction of motion multiple times. The results show that increased geometric complexity leads to a larger number of partitioned free-space cells, while the combinatorial complexity of the augmentation remains unchanged, since the number of subgraphs is identical within each pair. This distinction is also reflected in the computation time: a larger number of free-space cells increases the time required for partitioning and constructing the spacetime GCS, whereas the time required to construct the augmented GCS remains approximately constant within each pair. Finally, the solution time increases with the overall size of the resulting augmented spacetime GCS, as characterized by its total number of vertices and edges.

\section{Conclusion}
\label{sec:conclusion}

This paper extended the augmented graphs of convex sets framework to motion planning with temporal-logic precedence specifications in dynamic environments. By lifting
piecewise-linearly translating and dilating polytopic regions into spacetime, we converted the time-varying geometry into a static polyhedral representation
in one additional dimension.  This construction allows the augmented-GCS subgraph lattice to retain its original logical structure while the continuous optimization accounts jointly for collision avoidance, causality, and bounded physical speed. It also naturally supports time-gated task regions and time-varying doors without introducing a separate temporal abstraction.

Several directions remain open. Improving scalability will require more efficient spacetime partitioning, implicit construction of the augmented graph, and stronger rounding or graph-search heuristics.  Extending the geometric model to rotating or nonpolytopic regions, incorporating uncertainty in obstacle motion, and developing receding-horizon variants for online replanning are also important topics. These extensions would broaden the applicability of spacetime augmented GCS while preserving its unified treatment of continuous motion, temporal evolution, and logical precedence.






\bibliographystyle{IEEEtran}
\bibliography{bibliography}

@article{marcucci2024shortest,
  author  = {Marcucci, Tobia and Umenberger, Jack and Parrilo, Pablo A. and Tedrake, Russ},
  title   = {Shortest Paths in Graphs of Convex Sets},
  journal = {SIAM Journal on Optimization},
  volume  = {34},
  number  = {1},
  pages   = {507--532},
  year    = {2024}
}

@article{marcucci2023motion,
  author  = {Marcucci, Tobia and Petersen, Mark and Von Wrangel, David and Tedrake, Russ},
  title   = {Motion Planning Around Obstacles with Convex Optimization},
  journal = {Science Robotics},
  volume  = {8},
  number  = {84},
  pages   = {eadf7843},
  year    = {2023}
}

@misc{marcucci2025unified,
  author       = {Marcucci, Tobia},
  title        = {A Unified and Scalable Method for Optimization Over Graphs of Convex Sets},
  howpublished = {arXiv preprint arXiv:2510.20184},
  year         = {2025}
}

@misc{you2025augmented,
  author       = {You, Shilin and Luna, Gabriel and Shaikh, Junaid and Gostin, Daniel and Xiang, Yu and Koeln, Justin and Summers, Tyler},
  title        = {Motion Planning with Precedence Specifications via Augmented Graphs of Convex Sets},
  howpublished = {arXiv preprint arXiv:2510.22015},
  year         = {2025}
}

@misc{you2026augmented,
  author = {You, Shilin and Luna, Gabriel and Summers, Tyler H.},
  title  = {A Framework for Motion Planning with Temporal Logic Precedence Specifications via Augmented Graphs of Convex Sets},
  howpublished   = {arXiv preprint arXiv:2606.00842},
  year   = {2026}
}

@article{kurtz2023temporal,
  author  = {Kurtz, Vince and Lin, Hai},
  title   = {Temporal Logic Motion Planning with Convex Optimization via Graphs of Convex Sets},
  journal = {IEEE Transactions on Robotics},
  volume  = {39},
  number  = {5},
  pages   = {3791--3804},
  year    = {2023}
}

@misc{morozov2024multiquery,
  author       = {Morozov, S. and Marcucci, Tobia and Amice, A. and Graesdal, B. P. and Bosworth, R. and Parrilo, Pablo A. and Tedrake, Russ},
  title        = {Multi-query Shortest-path Problem in Graphs of Convex Sets},
  howpublished = {arXiv preprint arXiv:2409.19543},
  year         = {2024}
}

@misc{graesdal2024contact,
  author       = {Graesdal, B. P. and Chia, S. Y. C. and Marcucci, Tobia and Morozov, S. and Amice, A. and Parrilo, Pablo A. and Tedrake, Russ},
  title        = {Towards Tight Convex Relaxations for Contact-rich Manipulation},
  howpublished = {arXiv preprint arXiv:2402.10312},
  year         = {2024}
}

@article{cohn2025noneuclidean,
  author  = {Cohn, Thomas and Petersen, Mark and Simchowitz, Max and Tedrake, Russ},
  title   = {Non-{E}uclidean Motion Planning with Graphs of Geodesically Convex Sets},
  journal = {The International Journal of Robotics Research},
  volume  = {44},
  number  = {10--11},
  pages   = {1840--1862},
  year    = {2025}
}

@misc{natarajan2024implicit,
  author       = {Natarajan, R. and Liu, C. and Choset, H. and Likhachev, M.},
  title        = {Implicit Graph Search for Planning on Graphs of Convex Sets},
  howpublished = {arXiv preprint arXiv:2410.08909},
  year         = {2024}
}

@misc{luna2026augmented,
  author       = {Luna, Gabriel and Summers, Tyler},
  title        = {Augmented Graphs of Convex Sets and the Traveling Salesman Problem},
  howpublished = {arXiv preprint arXiv:2604.06406},
  year         = {2026}
}

@article{bellman1962dynamic,
  author  = {Bellman, Richard},
  title   = {Dynamic Programming Treatment of the Travelling Salesman Problem},
  journal = {Journal of the ACM},
  volume  = {9},
  number  = {1},
  pages   = {61--63},
  year    = {1962}
}

@article{held1962dynamic,
  author  = {Held, Michael and Karp, Richard M.},
  title   = {A Dynamic Programming Approach to Sequencing Problems},
  journal = {Journal of the Society for Industrial and Applied Mathematics},
  volume  = {10},
  number  = {1},
  pages   = {196--210},
  year    = {1962}
}

@inproceedings{maler2004monitoring,
  author    = {Maler, Oded and Nickovic, Dejan},
  title     = {Monitoring Temporal Properties of Continuous Signals},
  booktitle = {International Symposium on Formal Techniques in Real-Time and Fault-Tolerant Systems},
  publisher = {Springer},
  pages     = {152--166},
  year      = {2004}
}

@inproceedings{tang2025space,
  title={Space-time graphs of convex sets for multi-robot motion planning},
  author={Tang, Jingtao and Mao, Zining and Yang, Lufan and Ma, Hang},
  booktitle={2025 IEEE/RSJ International Conference on Intelligent Robots and Systems (IROS)},
  pages={8683--8690},
  year={2025},
  organization={IEEE}
}

@article{osburn2025systematic,
  title={Systematic Constraint Formulation and Collision-Free Trajectory Planning Using Space-Time Graphs of Convex Sets},
  author={Osburn, Matthew D and Peterson, Cameron K and Salmon, John L},
  journal={arXiv preprint arXiv:2508.10203},
  year={2025}
}

@manual{mosek,
   author = "MOSEK",
   title = "The MOSEK Python Fusion API manual. Version 11.0.",
   year = 2025,
   url = "https://docs.mosek.com/latest/pythonfusion/index.html"
 }

@misc{drake,
 author = "Russ Tedrake and the Drake Development Team",
 title = "Drake: Model-based design and verification for robotics",
 year = 2019,
 url = "https://drake.mit.edu"
}

@article{de2026stl,
  title={STL-GCS: A Planner-Controller Framework for Signal Temporal Logic via Graphs of Time-varying Convex Sets},
  author={De Carli, Nicola and Marchesini, Gregorio and Dimarogonas, Dimos},
  journal={arXiv preprint arXiv:2607.19196},
  year={2026}
}

@article{avis1996reverse,
  title={Reverse search for enumeration},
  author={Avis, David and Fukuda, Komei},
  journal={Discrete applied mathematics},
  volume={65},
  number={1-3},
  pages={21--46},
  year={1996},
  publisher={Elsevier}
}

@article{tang2026unified,
  title={Unified Branch-and-Bound Search for the Steiner Traveling Salesman Problem on Graphs of Convex Sets},
  author={Tang, Jingtao and Ma, Hang},
  journal={arXiv preprint arXiv:2608.21319},
  year={2026}
}

@article{chen2026signal,
  title={Signal Temporal Logic Motion Planning via Graphs of Convex Sets},
  author={Chen, Yu and Hou, Ancheng and Feng, Mingyang and Yu, Xiao and Yin, Xiang},
  journal={arXiv preprint arXiv:2605.23240},
  year={2026}
}

@article{lin2025towards,
  title={Towards tighter convex relaxation of mixed-integer programs: Leveraging logic network flow for task and motion planning},
  author={Lin, Xuan and Ren, Jiming and Luo, Yandong and Xie, Weijun and Zhao, Ye},
  journal={arXiv preprint arXiv:2509.24235},
  year={2025}
}

@inproceedings{lin2025optimization,
  title={Optimization-based task and motion planning under signal temporal logic specifications using logic network flow},
  author={Lin, Xuan and Ren, Jiming and Coogan, Samuel and Zhao, Ye},
  booktitle={2025 IEEE International Conference on Robotics and Automation (ICRA)},
  pages={12565--12571},
  year={2025},
  organization={IEEE}
}

\end{document}